\documentclass[conference]{IEEEtran}
\IEEEoverridecommandlockouts
\usepackage[utf8]{inputenc}
\usepackage[T5]{fontenc}
\usepackage[vietnamese,english]{babel}
\addto\captionsenglish{}
\usepackage{cite}
\usepackage{algorithmic}
\usepackage{graphicx}
\usepackage{textcomp}
\usepackage{xcolor}
\usepackage{pgfplots}
\usepackage{float}
\def\BibTeX{{\rm B\kern-.05em{\sc i\kern-.025em b}\kern-.08em
    T\kern-.1667em\lower.7ex\hbox{E}\kern-.125emX}}
\usepackage{amsmath}
\usepackage{amsthm}   
\usepackage{amssymb}
\usepackage{amsfonts}
\usepackage{mathtools}
\usepackage{enumitem}
\usepackage{algorithm}
\usepackage[algo2e]{algorithm2e}
\usepackage{tikz}
\usetikzlibrary{arrows.meta}
\usepackage{bm}
\usepackage{csvsimple}
\usepackage{booktabs}
\usepackage{siunitx}
\theoremstyle{plain} 
\newtheorem{theorem}{Theorem}[section]     
\newtheorem{corollary}[theorem]{Corollary} 
\newtheorem{proposition}[theorem]{Proposition} 

\theoremstyle{definition} 
\newtheorem{definition}[theorem]{Definition}

\newtheorem{remark}[theorem]{Remark}
\newtheorem{problem}[theorem]{Problem}
\newtheorem{assumption}{Assumption}

\definecolor{mplblue}{RGB}{31, 119, 180}
\definecolor{mplorange}{RGB}{255, 127, 14}
\definecolor{mplgreen}{RGB}{44, 160, 44}
\definecolor{mplred}{RGB}{214, 39, 40}

\begin{document}

\title{A Novel Approach for the SDIR Epidemic Model on Online Social Networks}
\author{\IEEEauthorblockN{1\textsuperscript{st} Nguyen Hong Phuc}
	\IEEEauthorblockA{\textit{Faculty of Information Technology} \\
		\textit{Posts and Telecommunications}\\ \textit{Institute of Technology}\\
		Hanoi, Vietnam \\
		PhucNH.B25KH101@stu.ptit.edu.vn}
	\and
	\IEEEauthorblockN{2\textsuperscript{nd} Duong Khanh Ly}
	\IEEEauthorblockA{\textit{Faculty of Information Technology} \\
		\textit{Posts and Telecommunications}\\ \textit{Institute of Technology}\\
		Hanoi, Vietnam \\
		LyDK.B24CE169@stu.ptit.edu.vn}
     \and
     \IEEEauthorblockN{3\textsuperscript{rd} Hoang Phi Dung}
	\IEEEauthorblockA{\textit{Faculty of Fundamental Sciences} \\
		\textit{Posts and Telecommunications}\\ \textit{Institute of Technology}\\
		Hanoi, Vietnam \\
		dunghp@ptit.edu.vn}
}

\maketitle

\begin{abstract}
Information diffusion can be controlled by restricting or removing links (edges) in online social networks, as well as in real-world networks. To identify the most influential links to remove while minimizing diffusion, previous studies have proposed upper bounds for spreading processes in SIR and SIS models, using supermodularity and weighted matrices to identify critical links in contact networks. However, in some cases, existing upper bounds are not sufficiently tight to accurately capture the effect of important edges, as in the SDIR model of \cite{Khanh2026}. We therefore propose a tighter upper bound for controlling diffusion in the SDIR model by directly analyzing the dynamics of the two state vectors D and I in a $2N$-dimensional space. This approach yields an improved spectral-radius convergence condition and outperforms the previous method. Simulations on the synthetic Erd\H{o}s-R\'enyi network and the real-world Haslemere dataset using a Greedy edge-deletion algorithm demonstrate its effectiveness for influence minimization on social networks.
\end{abstract}

\begin{IEEEkeywords}
SDIR epidemic models, complex networks, discrete optimization, Markov chains, edge deletion, greedy, mean-field approximation.
\end{IEEEkeywords}

\section{Introduction}
Epidemic spreading models were introduced in the early twentieth century through the mathematical epidemiology work of W. O. Kermack and A. G. McKendrick \cite{Kermack1927}. Over the past two decades, seminal studies such as \cite{Albert2000,Pastor-Satorras2001} have shown that network topology plays a fundamental role in determining epidemic thresholds and dynamics (see, also \cite{Pastor-Satorras2015}). However, classical models often assume instantaneous or memoryless state transitions, which do not capture complex behavioral responses on modern online social platforms, such as user hesitation, delayed sharing, or deliberate decision-making, nor malware behavior in computer networks and the Internet \cite{Mieghem2009}. Consequently, epidemic models such as SIR, SIS, and their variants have become important tools in computer science, cybersecurity, malware analysis, online social networks, and complex-network theory \cite{Kempe2003,Ahn2013,Chen2014,Kempe2015,Shakarian2015,Nowzari2016,Silva2016,Pare2020,Ruhi2015,Yi2022,Khanh2026}. Fundamental problems in this area include influence maximization, influence minimization, influence blocking, and network anomaly detection.

Recently, interest in individual-based models has increased because complex networks effectively represent major technological platforms such as the Internet and online social networks \cite{Youssef2011,Sharkey2008,Yi2022,Pare2020}. Building on the work of Pare et al. \cite{Pare2020} and \cite{Yi2022}, Khanh et al. \cite{Khanh2026} proposed the SDIR model to describe user behavior on online social networks. On platforms such as Facebook, TikTok, and X (Twitter), users may hesitate and consider whether to share received information before making a decision. Thus, sharing depends on user behavior and is stochastic. In this paper, we study the SDIR model using a different mathematical approach to tighten the results in \cite{Khanh2026}.

\vspace{1em}
\noindent\textit{Related Works}

Individual-based models have recently been studied extensively, bringing the analysis of dynamical systems on complex networks closer to real technological platforms \cite{Youssef2011,Mieghem2009,Pastor-Satorras2015,Yi2022,Cho2024,Dung2026,Khanh2026}. Most of these models use either the Euler method \cite{Youssef2011,Pare2020} or mean-field approximation \cite{Mieghem2009,Pastor-Satorras2015,Yi2022,Khanh2026,Dung2026}. More recently, alternative approaches extending the threshold-based method of Wang \cite{Wang2003} have considered information or epidemic spreading within local communities of very large networks (see \cite{Dung2026-2}). Models incorporating a delay state during propagation have also received increasing attention \cite{Liu2022}. In contrast to influence maximization, influence minimization by reducing the number of infections has been considered in \cite{Pham2019,Yi2022,Chen2021,Xie2023,Khanh2026}.

\vspace{1em}
\noindent\textit{Contributions}

Unlike Khanh \textit{et al.}~\cite{Khanh2026}, where the two states
$\mathbf{x}(t)$ and $\mathbf{y}(t)$ are combined through the weighted quantity
$\mathbf{x}(t)+\mathbf{Q}\mathbf{y}(t)$, we directly analyze the two full state vectors in the
$2N$-dimensional space, thereby preserving their coupled dynamics.
This approach removes the dependence on the auxiliary weighting matrix
$\mathbf{Q}$ and yields a provably tighter spectral condition for model convergence in Theorem~\ref{sdir_previous_relation}.
We then adopt the edge-deletion approach in~\cite{Yi2022,Khanh2026,Dung2026} to minimize the number of infections, for which we derive the upper bound in Theorem~\ref{upper-bound} as a surrogate objective for the greedy edge-deletion algorithm.
Theorem~\ref{tighter-bound} further establishes that this bound is pointwise tighter than that proposed in~\cite{Khanh2026}.

\vspace{1em}
\noindent\textit{Outline}

The remainder of this paper is organized as follows. Section \ref{section-2} presents the SDIR model and the main assumptions. Sections \ref{section-3} and \ref{section-4} introduce the proposed approach and clarify its differences from the previous method through the convergence result (Theorem \ref{sdir_convergence}) and the spectral result (Theorem \ref{sdir_previous_relation}); an upper bound is then derived (Theorem \ref{upper-bound}) and the main problem is formulated (Problem \ref{problem}). Section \ref{section-5} presents the edge-deletion algorithm (Algorithm \ref{greedy}), and numerical results on a synthetic network and a real-world network are reported in Section \ref{section-6}. Finally, Section \ref{section-7} concludes the paper, and the proofs are provided in Appendix.

\vspace{1em}
\noindent\textit{Notation}

Let $\mathbb{R}$ denote the set of real numbers. For any positive integer $N$, we have $[N]=\{1,\ldots,N\}$, $\mathbf{I}$ denotes the identity matrix, $\operatorname{diag}(a_i)$ denotes the diagonal matrix with diagonal entries $a_i$, and $\rho(A)$ is the spectral radius of any matrix $A$. The Euclidean norm of a vector and the induced matrix $2$-norm are denoted by
$\|\cdot\|$, while $\|\cdot\|_1$ denotes the $1$-norm. For vectors or matrices of the same dimension, $\leq$ and $\geq$ are understood entrywise; for example, $\mathbf{X}\leq\mathbf{Y}$ if and only if $X_{ij}\leq Y_{ij}$ for all $i,j \in [N]$.

\section{Model Description}\label{section-2}

\begin{figure}[htbp]
    \centering
    \begin{tikzpicture}[
        >=stealth,
        node style/.style={circle, draw, minimum size=0.6cm, font=\small},
        S/.style={node style, fill=green!80},
        D/.style={node style, fill=orange!80},
        I/.style={node style, fill=red!80},
        R/.style={node style, fill=cyan!50},
        Ij/.style={node style, fill=yellow!90},
        lbl/.style={font=\scriptsize, inner sep=2pt}
    ]
        
        \node[S] at (-2.6, 0) (S) {$S_i$};
        \node[I] at (0, -0.6) (I) {$I_i$};
        \node[R] at (2.6, 0) (R) {$R_i$};
        \node[D] at (0, 0.6) (D) {$D_i$};
        
        \node[Ij] at (-4.5, 1.2) (Ij1) {$I_{j_1}$};
        \node[Ij] at (-5.2, -0.1) (Ij2) {$I_{j_2}$}; 
        \node[Ij] at (-4.5, -1.4) (Ij3) {$I_{j_3}$};

        \draw[->] (Ij1) -- (S) node[midway, above right, lbl] {$\beta_{ij_1}(t)$};
        \draw[->] (Ij2) -- (S) node[midway, above left, lbl] {$\beta_{ij_2}(t)$};
        \draw[->] (Ij3) -- (S) node[midway, below right, lbl] {$\beta_{ij_3}(t)$};

        \draw[->] (S) -- (I) node[midway, below left, lbl] {$\alpha_i(t)$};
        \draw[->] (S) -- (D) node[midway, above left, lbl] {$1-\alpha_i(t)$};

        \draw[->] (D) -- (I) node[midway, left, lbl] {$\omega_i(t)$};
        \draw[->] (D) -- (R) node[midway, above right, lbl] {$\delta'_i(t)$};
        
        \draw[->] (I) -- (R) node[midway, below right, lbl] {$\delta_i(t)$};
    \end{tikzpicture}
    \label{SDIR}
\end{figure}

The SDIR model proposed in \cite{Khanh2026} extends the SIR model to describe information diffusion on social networks. The $D$ (Delayable) state represents users (nodes) who have received information but hesitate to spread it immediately and may subsequently transition to state $I$ or $R$.

Consider a spreading process on a digraph $G = (V, E)$ with $N$ nodes. At each time step $t$, the state of node $i\in V$ is represented by the indicator variables $S_i(t), D_i(t), I_i(t)$ and $R_i(t)\in\{0,1\}$, corresponding to the Susceptible ($S$), Delayable ($D$), Infected ($I$), and Recovered ($R$) states, respectively. Since each node occupies exactly one state at a time, $S_i(t) + D_i(t) + I_i(t) + R_i(t) = 1, \forall i\in V, \forall t>0$. A neighboring node $j$ of $i$ in state $I$ can infect node $i$ with probability $\beta_{ij}(t)$; upon successful infection, node $i$ moves directly to state $I$ with probability $\alpha_i(t)$ or to the delay state $D$ with probability $1-\alpha_i(t)$. For node $i$ in state $D$, draw $p\sim U[0,1]$. Node $i$ moves to $I$ if $p<\omega_i(t)$, to $R$ if $\omega_i(t)\leq p<\omega_i(t)+\delta'_i(t)$, and otherwise remains in $D$. Finally, node $i$ recovers to $R$ at rate $\delta_i(t)$ from state $I$. We assume that the parameters $\beta_{ij}(t), \alpha_i(t), \omega_i(t), \delta_i(t)$ and $\delta'_i(t)$ are independent random variables with time-invariant distributions.

Based on the information-spreading model in \cite{Khanh2026}, taking expectations on both sides and applying mean-field approximation to the infection probability yields the deterministic SDIR model.
\begin{align}\label{eq:SDIR}
	\mathbf{x}(t+1) &= (\mathbf{I}-\mathbf{D}+\mathbf{A}\mathbf{S}(t)\mathbf{B})\mathbf{x}(t) \\ 
	& + \mathbf{W}\mathbf{y}(t) \notag\\
	\mathbf{y}(t+1) &= (\mathbf{I}-\mathbf{A})\mathbf{S}(t)\mathbf{B}\mathbf{x}(t) \notag\\ 
	& + (\mathbf{I}-\mathbf{W}-\mathbf{D}')\mathbf{y}(t) \\ \nonumber 
	\mathbf{r}(t+1) &= \mathbf{D}\mathbf{x}(t)+\mathbf{D}'\mathbf{y}(t)+\mathbf{r}(t)  \nonumber 
\end{align}
Here, the vectors $\mathbf{x}(t), \mathbf{y}(t)$, and $\mathbf{r}(t)$ contain the $i$th entries $\mathbb{E}[I_i(t)], \mathbb{E}[D_i(t)]$, and $\mathbb{E}[R_i(t)]$, respectively, for $\forall i \in [N]$. The matrix $\mathbf{S}(t)=\operatorname{diag}(s_i(t))$, where $s_i(t)=\mathbb{E}[S_i(t)]$ $\forall i \in [N]$, and the matrix $\mathbf{B}$ has entries $B_{ij}=\mathbb{E}[\beta_{ij}(t)]$. The remaining diagonal matrices are $\mathbf{D}=\operatorname{diag}(\mathbb{E}[\delta_i(t)]), \mathbf{A}=\operatorname{diag}(\mathbb{E}[\alpha_i(t)]), \mathbf{W}=\operatorname{diag}(\mathbb{E}[\omega_i(t)])$, and $\mathbf{D}'=\operatorname{diag}(\mathbb{E}[\delta'_i(t)])$. We also assume that $\sum_{j=1}^N B_{ij}<1, \forall i\in [N]$.

\begin{remark}
Unlike~\cite{Khanh2026}, where $\mathbf{S}(t)$ is replaced by $\mathbf{S}(0)$ to obtain a linear recurrence, we retain the time dependence of $\mathbf{S}(t)$. This directly reflects the variation of the susceptible state during propagation.
\end{remark}

\begin{assumption}\label{assp1}
$D_i\leq D'_i$ for all $i\in[N]$.
\end{assumption}
This assumption is equivalent to Assumption~1 in~\cite{Khanh2026}.
\begin{assumption}\label{assp3}
$0<W_i+D'_i\leq1$ for all $i\in[N]$.
\end{assumption}
The quantity $W_i+D'_i$ represents the probability that a node leaves the delay state $D$ at a given time step, either by transitioning to state $I$ or to state $R$. Thus, this assumption ensures that the transition probabilities are valid and that a node eventually leaves state $D$ after the delay process.

\section{Global Convergence of Deterministic SDIR Model}\label{section-3}

In the D-SIR model of Yi \textit{et al.}~\cite{Yi2022},
$\mathbf{x}(t+1)$ is determined directly by a linear system based on $\mathbf{x}(t)$. In the
SDIR model, the delay state $D$ couples
$\mathbf{x}(t)$ and $\mathbf{y}(t)$. We therefore analyze these two state vectors jointly.

For convenience, let $\mathbf{F}=\mathbf{S}(0)\mathbf{B}$.

For $\mathbf{x}(t),\mathbf{y}(t)\in\mathbb{R}^N$, define
$
\mathbf{v}(t)
\overset{\mathrm{def}}{=}
\begin{bmatrix}
\mathbf{x}(t)\\
\mathbf{y}(t)
\end{bmatrix}
\in\mathbb{R}^{2N}
$ and consider the matrix $\mathbf{M}
=
\begin{bmatrix}
\mathbf{I}-\mathbf{D}+\mathbf{A}\mathbf{F}
&
\mathbf{W}\\
(\mathbf{I}-\mathbf{A})\mathbf{F}
&
\mathbf{I}-\mathbf{W}-\mathbf{D}'
\end{bmatrix}$ as the state-transition matrix from $\mathbf{v}(0)$ to $\mathbf{v}(1)$.

\begin{theorem}\label{sdir_convergence}
Consider the SDIR model satisfying $\rho(\mathbf{M})<1$. Then
$\widehat{\mathbf{v}}=\mathbf{0}$ is the unique equilibrium of $(\mathbf{x},\mathbf{y})$; specifically, $\lim_{t\to \infty}\mathbf{x}(t)=\lim_{t\to \infty}\mathbf{y}(t)=\mathbf{0}$, and the system $(\mathbf{x}(t),\mathbf{y}(t))$ is exponentially stable at the origin for every initial state satisfying the model conditions.
\end{theorem}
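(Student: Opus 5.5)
The plan is a comparison argument: bound the nonlinear, time-varying recursion from above by the linear system $\mathbf{v}(t+1)=\mathbf{M}\mathbf{v}(t)$, then let $\rho(\mathbf{M})<1$ force exponential decay.

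\textbf{Step 1 (invariance and monotonicity of $\mathbf{S}(t)$).} First I check by induction that all quantities stay in $[0,1]$ and that $s_i(t)$ is nonincreasing, so $\mathbf{S}(t)\le\mathbf{S}(0)$ for all $t$. The model conditions give $s_i+x_i+y_i+r_i=1$. Newly affected mass is removed from $S$, since
\[
s_i(t+1)=s_i(t)\Bigl(1-\textstyle\sum_j B_{ij}x_j(t)\Bigr).
\]
Because $\sum_j B_{ij}<1$ and $x_j\in[0,1]$, this factor lies in $(0,1]$, so $s_i(t+1)\in[0,s_i(t)]$.

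I also need nonnegativity of the diagonal blocks. Assumption~\ref{assp3} gives $\mathbf{I}-\mathbf{W}-\mathbf{D}'\ge 0$, and $\mathbf{I}-\mathbf{D}\ge0$ holds since $D_i\in[0,1]$. Hence every block of $\mathbf{M}$ is entrywise nonnegative, and so is every block of the time-varying matrix
\[
\mathbf{M}(t)=\begin{bmatrix}\mathbf{I}-\mathbf{D}+\mathbf{A}\mathbf{S}(t)\mathbf{B} & \mathbf{W}\\ (\mathbf{I}-\mathbf{A})\mathbf{S}(t)\mathbf{B} & \mathbf{I}-\mathbf{W}-\mathbf{D}'\end{bmatrix}.
\]
This matrix governs $\mathbf{v}(t+1)=\mathbf{M}(t)\mathbf{v}(t)$. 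Since $\mathbf{S}(t)\le\mathbf{S}(0)$ and $\mathbf{A},\mathbf{I}-\mathbf{A},\mathbf{B}\ge0$, we get $0\le\mathbf{M}(t)\le\mathbf{M}$ entrywise.

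\textbf{Step 2 (comparison).} With $\mathbf{v}(t)\ge0$, induction gives $0\le\mathbf{v}(t)\le\mathbf{M}^t\mathbf{v}(0)$. The inductive step is
\[
\mathbf{v}(t+1)=\mathbf{M}(t)\mathbf{v}(t)\le \mathbf{M}\mathbf{v}(t)\le \mathbf{M}\,\mathbf{M}^t\mathbf{v}(0).
\]
The first inequality uses $\mathbf{M}(t)\le\mathbf{M}$ and $\mathbf{v}(t)\ge0$. The second uses that $\mathbf{M}\ge0$ preserves the entrywise order.

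\textbf{Step 3 (exponential decay).} Since $\rho(\mathbf{M})<1$, Gelfand's formula gives, for any $\epsilon$ with $\rho(\mathbf{M})+\epsilon<1$, a constant $c$ such that $\|\mathbf{M}^t\|\le c(\rho(\mathbf{M})+\epsilon)^t$. Then
\[
\|\mathbf{v}(t)\|\le\|\mathbf{M}^t\mathbf{v}(0)\|\le c(\rho(\mathbf{M})+\epsilon)^t\|\mathbf{v}(0)\|,
\]
where the first inequality holds because $0\le\mathbf{v}(t)\le\mathbf{M}^t\mathbf{v}(0)$ implies a norm inequality for the Euclidean norm. This gives exponential stability and $\mathbf{x}(t),\mathbf{y}(t)\to\mathbf{0}$.

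\textbf{Step 4 (uniqueness of the equilibrium).} Let $\widehat{\mathbf{v}}$ be an equilibrium reached from an admissible state. Then $\widehat{\mathbf{v}}=\mathbf{M}(\cdot)\widehat{\mathbf{v}}$ with a constant $\mathbf{S}$, so $0\le\widehat{\mathbf{v}}\le\mathbf{M}^t\widehat{\mathbf{v}}\to0$, which forces $\widehat{\mathbf{v}}=\mathbf{0}$. Conversely, $\mathbf{v}=\mathbf{0}$ is clearly a fixed point.

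\textbf{Main obstacle.} The delicate part is Step 1, in particular pinning down the susceptible update consistent with the mean-field model. I will derive it from the conservation law $s_i+x_i+y_i+r_i=1$ together with~\eqref{eq:SDIR}. That derivation must yield exactly $s_i(t+1)=s_i(t)\bigl(1-\sum_j B_{ij}x_j(t)\bigr)$, and with it $s_i(t)\le s_i(0)$ and the nonnegativity of $\mathbf{v}$.
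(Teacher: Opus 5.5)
Your proposal is correct and follows essentially the same route as the paper. Both use $\mathbf{S}(t)\le\mathbf{S}(0)$ to get $\mathbf{v}(t+1)\le\mathbf{M}\mathbf{v}(t)$, iterate to $\mathbf{v}(t)\le\mathbf{M}^t\mathbf{v}(0)$, and conclude from $\|\mathbf{M}^t\|\le C\gamma^t$ with $\rho(\mathbf{M})<\gamma<1$. You also make explicit several points the paper leaves implicit: the nonnegativity of $\mathbf{v}(t)$, the update $s_i(t+1)=s_i(t)\bigl(1-\sum_j B_{ij}x_j(t)\bigr)$ (derived in the paper's Section~\ref{section-4}), and an actual uniqueness argument. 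That uniqueness argument silently uses $\widehat{\mathbf{S}}\le\mathbf{S}(0)$, so the equilibrium's transition matrix is dominated by $\mathbf{M}$; state this explicitly.
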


We next compare the convergence condition based on the $2N\times2N$ matrix
$\mathbf{M}$ with that in~\cite{Khanh2026}.
Following~\cite{Khanh2026}, under Assumption~\ref{assp1}, we fix throughout this paper $\mathbf{Q}=\operatorname{diag}(q_i)$ with
$q_i\in\left[\frac{W_i}{W_i+D'_i-D_i},1\right]$, provided that $W_i+D'_i-D_i>0$. If $W_i+D'_i-D_i=0$, which gives $W_i=0$, we choose any $q_i\in(0,1]$. Thus, $q_i>0$ for all $i\in[N]$. Define
$\mathbf{G}_Q=\mathbf{A}+\mathbf{Q}(\mathbf{I}-\mathbf{A})$ and
$\mathbf{M}_Q=\mathbf{I}-\mathbf{D}+\mathbf{G}_Q\mathbf{F}$ is the comparison matrix used for evaluating the convergence condition in~\cite{Khanh2026}.

\begin{theorem}\label{sdir_previous_relation}
Under Assumption~\ref{assp1}, we have $\rho(\mathbf{M})\leq\rho(\mathbf{M}_Q)$.
\end{theorem}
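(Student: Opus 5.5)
The plan is a Perron--Frobenius subinvariance argument. I would build a positive row vector $\mathbf{w}^\top\in\mathbb{R}^{2N}$ with $\mathbf{w}^\top\mathbf{M}\le\rho(\mathbf{M}_Q)\,\mathbf{w}^\top$ entrywise, and then apply the Collatz--Wielandt bound. Two facts are needed first.

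\textbf{Nonnegativity.} $\mathbf{M}$ and $\mathbf{M}_Q$ are nonnegative matrices. The entries of $\mathbf{A},\mathbf{D},\mathbf{W},\mathbf{D}'$ are probabilities, $\mathbf{F}\ge 0$, and $\mathbf{I}-\mathbf{W}-\mathbf{D}'\ge 0$ by Assumption~\ref{assp3}.

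\textbf{Positivity, via perturbation.} Perron vectors may have zero entries, so I would work with $\mathbf{F}_\varepsilon=\mathbf{F}+\varepsilon\mathbf{J}$, where $\mathbf{J}$ is the all-ones matrix and $\varepsilon>0$. Let $\mathbf{M}^\varepsilon$ and $\mathbf{M}_Q^\varepsilon$ be the resulting matrices. Then $\mathbf{M}_Q^\varepsilon$ is entrywise positive, because $q_i>0$ gives $\mathbf{G}_Q$ a positive diagonal. Hence it has a strictly positive left Perron vector $\mathbf{z}$ with $\mathbf{z}^\top\mathbf{M}_Q^\varepsilon=\lambda_\varepsilon\mathbf{z}^\top$, where $\lambda_\varepsilon=\rho(\mathbf{M}_Q^\varepsilon)$. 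At the end I let $\varepsilon\to0$; continuity of the spectral radius gives $\rho(\mathbf{M})\le\rho(\mathbf{M}_Q)$.

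\textbf{Key computation.} Set $\mathbf{w}^\top=[\mathbf{z}^\top,\ \mathbf{z}^\top\mathbf{Q}]$, which is strictly positive. The first block of $\mathbf{w}^\top\mathbf{M}^\varepsilon$ is
\[
\mathbf{z}^\top(\mathbf{I}-\mathbf{D}+\mathbf{A}\mathbf{F}_\varepsilon)+\mathbf{z}^\top\mathbf{Q}(\mathbf{I}-\mathbf{A})\mathbf{F}_\varepsilon=\mathbf{z}^\top\mathbf{M}_Q^\varepsilon=\lambda_\varepsilon\mathbf{z}^\top,
\]
using that diagonal matrices commute. The second block is $\mathbf{z}^\top\bigl(\mathbf{W}+\mathbf{Q}(\mathbf{I}-\mathbf{W}-\mathbf{D}')\bigr)$. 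Componentwise, I need
\[
W_i+q_i(1-W_i-D'_i)\le\lambda_\varepsilon q_i .
\]
I would split this into two inequalities.
\begin{itemize}
\item $W_i+q_i(1-W_i-D'_i)\le q_i(1-D_i)$. This is equivalent to $q_i(W_i+D'_i-D_i)\ge W_i$. It holds by the choice $q_i\ge W_i/(W_i+D'_i-D_i)$, which is where Assumption~\ref{assp1} enters. In the degenerate case $W_i=0$ and $D'_i=D_i$, both sides are equal.
\item $1-D_i\le\lambda_\varepsilon$. For a nonnegative matrix, the spectral radius is at least each diagonal entry, and the $i$th diagonal entry of $\mathbf{M}_Q^\varepsilon$ is at least $1-D_i$.
\end{itemize}
Multiplying the two by $q_i>0$ and chaining gives $\mathbf{w}^\top\mathbf{M}^\varepsilon\le\lambda_\varepsilon\mathbf{w}^\top$.

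\textbf{Conclusion.} Since $\mathbf{w}>0$ and $\mathbf{M}^\varepsilon\ge0$, the Collatz--Wielandt bound gives $\rho(\mathbf{M}^\varepsilon)\le\max_k(\mathbf{w}^\top\mathbf{M}^\varepsilon)_k/w_k\le\lambda_\varepsilon$. Passing to the limit $\varepsilon\to0$ finishes the proof.

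\textbf{Main obstacle.} The delicate step is the second block. The algebraic condition on $\mathbf{Q}$ only yields the bound $q_i(1-D_i)$, and this must still be compared with $\lambda_\varepsilon q_i$. The diagonal-dominance fact $\rho(\mathbf{M}_Q^\varepsilon)\ge1-D_i$ is what closes that gap. The positivity needed for Collatz--Wielandt is handled by the $\varepsilon$-perturbation.
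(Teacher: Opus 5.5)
Your argument is correct, and its core coincides with the paper's. Your test vector $\mathbf{w}^\top=\mathbf{z}^\top\mathbf{L}_Q$, with $\mathbf{L}_Q=[\mathbf{I}\ \mathbf{Q}]$, uses exactly the paper's aggregation map. Your two block computations are the entrywise intertwining $\mathbf{L}_Q\mathbf{M}\le\mathbf{M}_Q\mathbf{L}_Q$ that the paper proves: equality in the first block, and the condition $q_i(W_i+D'_i-D_i)\ge W_i$ in the second.

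What differs is the direction of the Perron--Frobenius comparison. The paper takes a right Perron vector $\mathbf{z}\ge\mathbf{0}$ of $\mathbf{M}$ and sets $\mathbf{u}=\mathbf{L}_Q\mathbf{z}$, which is nonnegative and nonzero since $q_i>0$. It gets $\mathbf{M}_Q\mathbf{u}\ge\mathbf{L}_Q\mathbf{M}\mathbf{z}=\rho(\mathbf{M})\mathbf{u}$ and concludes by subinvariance. That is a lower bound on $\rho(\mathbf{M}_Q)$ valid for any nonnegative nonzero test vector, so no positivity, perturbation, or continuity argument is needed. You instead take a left Perron vector of $\mathbf{M}_Q$ and use the Collatz--Wielandt upper bound on $\rho(\mathbf{M})$. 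That bound genuinely requires a strictly positive test vector, since a left Perron vector of a reducible $\mathbf{M}_Q$ may have zero entries. So your $\varepsilon$-perturbation and limit are necessary on your route, and you carry them out correctly.

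The step you call the main obstacle is simpler than you make it. Since $\mathbf{z}\ge\mathbf{0}$ and $\mathbf{G}_Q\mathbf{F}_\varepsilon\mathbf{Q}\ge\mathbf{0}$, you have $\mathbf{z}^\top(\mathbf{I}-\mathbf{D})\mathbf{Q}\le\mathbf{z}^\top\mathbf{M}_Q^\varepsilon\mathbf{Q}=\lambda_\varepsilon\mathbf{z}^\top\mathbf{Q}$ directly. This is the paper's inequality $(\mathbf{I}-\mathbf{D})\mathbf{Q}\le\mathbf{M}_Q\mathbf{Q}$; your diagonal-entry bound $\lambda_\varepsilon\ge 1-D_i$ is an equally valid substitute.

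The paper's route is shorter. Yours has the modest bonus of producing an explicit strictly positive weight $\mathbf{w}$ with $\mathbf{w}^\top\mathbf{M}\le\rho(\mathbf{M}_Q^\varepsilon)\,\mathbf{w}^\top$. This is a linear Lyapunov-type certificate for the $2N$-dimensional system.
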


Moreover, it was shown in~\cite{Khanh2026} that, under
Assumption~\ref{assp1}, $
\rho(\mathbf{M}_Q)
\leq
\rho(\mathbf{M}_{\mathrm{SIR}})$ where
$\mathbf{M}_{\mathrm{SIR}}
=\mathbf{I}-\mathbf{D}+\mathbf{F}$, as given in~\cite{Yi2022}. Therefore, we have the following result.

\begin{corollary}\label{sdir_sir_relation}
Under Assumption~\ref{assp1}, we have $
\rho(\mathbf{M})
\leq
\rho(\mathbf{M}_Q)
\leq
\rho(\mathbf{M}_{\mathrm{SIR}})$. Consequently, if $\rho(\mathbf{M}_{\mathrm{SIR}})<1$, then
$\rho(\mathbf{M})<1$.
\end{corollary}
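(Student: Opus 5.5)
The plan is to chain the two inequalities and then read off the implication. The first inequality, $\rho(\mathbf{M})\leq\rho(\mathbf{M}_Q)$, is exactly Theorem~\ref{sdir_previous_relation}, which holds under Assumption~\ref{assp1}, so I take it as given. The second inequality, $\rho(\mathbf{M}_Q)\leq\rho(\mathbf{M}_{\mathrm{SIR}})$, is the result quoted from~\cite{Khanh2026}. To keep the argument self-contained, I will re-derive it from the monotonicity of the spectral radius on nonnegative matrices: if $\mathbf{0}\leq\mathbf{X}\leq\mathbf{Y}$ entrywise, then $\rho(\mathbf{X})\leq\rho(\mathbf{Y})$, a standard consequence of Perron--Frobenius theory.

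\textbf{Step 1: $\mathbf{M}_Q$ is nonnegative.} The diagonal entries of $\mathbf{I}-\mathbf{D}$ are $1-D_i\geq 0$, because $D_i$ is the expectation of a probability. The matrix $\mathbf{F}=\mathbf{S}(0)\mathbf{B}$ has nonnegative entries, since $s_i(0)\in[0,1]$ and $B_{ij}\geq 0$. Finally $\mathbf{G}_Q$ is diagonal with entries $A_i+q_i(1-A_i)\geq 0$, using $A_i\in[0,1]$ and $q_i>0$.

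\textbf{Step 2: $\mathbf{M}_Q\leq\mathbf{M}_{\mathrm{SIR}}$ entrywise.} Since $q_i\leq 1$, each diagonal entry of $\mathbf{G}_Q$ satisfies
\[
A_i+q_i(1-A_i)\leq A_i+(1-A_i)=1,
\]
so $\mathbf{G}_Q\leq\mathbf{I}$. Combined with $\mathbf{F}\geq\mathbf{0}$, this gives $\mathbf{G}_Q\mathbf{F}\leq\mathbf{F}$ entrywise. Adding the common term $\mathbf{I}-\mathbf{D}$ to both sides yields $\mathbf{0}\leq\mathbf{M}_Q\leq\mathbf{M}_{\mathrm{SIR}}$.

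\textbf{Step 3: compare spectral radii.} Perron--Frobenius monotonicity applied to Step 2 gives $\rho(\mathbf{M}_Q)\leq\rho(\mathbf{M}_{\mathrm{SIR}})$. Concatenating this with Theorem~\ref{sdir_previous_relation} produces the full chain in the statement.

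\textbf{Step 4: the implication.} If $\rho(\mathbf{M}_{\mathrm{SIR}})<1$, the chain gives $\rho(\mathbf{M})\leq\rho(\mathbf{M}_{\mathrm{SIR}})<1$. Theorem~\ref{sdir_convergence} then additionally yields convergence of the model, although the statement itself only requires $\rho(\mathbf{M})<1$.

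\textbf{Main obstacle.} Essentially all of the difficulty lies in Theorem~\ref{sdir_previous_relation}, which I am assuming. Within the corollary itself, the only delicate point is the nonnegativity in Step 1, which the monotonicity lemma requires. This depends on $D_i\leq 1$, $A_i\in[0,1]$ and $q_i\in(0,1]$; all of these follow from the probabilistic meaning of the parameters and from the paper's choice of $\mathbf{Q}$.
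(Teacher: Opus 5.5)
Your proposal is correct and follows the paper's route: the corollary is Theorem~\ref{sdir_previous_relation} chained with the comparison $\rho(\mathbf{M}_Q)\leq\rho(\mathbf{M}_{\mathrm{SIR}})$, which the paper simply cites from~\cite{Khanh2026}. Your self-contained derivation of that second link is valid: $\mathbf{0}\leq\mathbf{G}_Q\leq\mathbf{I}$ gives $\mathbf{0}\leq\mathbf{M}_Q\leq\mathbf{M}_{\mathrm{SIR}}$ entrywise, and Perron--Frobenius monotonicity then applies. It uses only $q_i\in(0,1]$ and the fact that the parameters are probabilities, so Assumption~\ref{assp1} enters only through Theorem~\ref{sdir_previous_relation} and the admissible choice of $\mathbf{Q}$.
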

\section{Problem and Bounding Function}\label{section-4}
\subsection{Main Problem}

To quantify the infection level of the network, for each
$i\in[N]$, define $m_i(t)=x_i(t)+y_i(t)+r_i(t)$ and let $\mathbf{m}(t)=\mathbf{x}(t)+\mathbf{y}(t)+\mathbf{r}(t)$. Assume the initial state satisfies $m_i(0)=x_i(0)+y_i(0)+r_i(0)\in[0,1]$ for all $i\in[N]$. From \eqref{eq:SDIR}, we have
$
m_i(t+1)-m_i(t)
=
(1-m_i(t))
\sum_{j=1}^{N}B_{ij}x_j(t).
$ Since $0\leq x_j(t)\leq m_j(t)\leq1$, and
$\sum_{j=1}^{N}B_{ij}<1$, we have
$
0\leq
\sum_{j=1}^{N}B_{ij}x_j(t)
<1.
$ If $m_i(t)\in[0,1]$, then $0\leq1-m_i(t)\leq1$, and hence
$
0\leq
(1-m_i(t))
\sum_{j=1}^{N}B_{ij}x_j(t)
\leq
1-m_i(t).
$
Hence, $
0\leq m_i(t)\leq m_i(t+1)\leq1.
$ By induction, $m_i(t)\in[0,1]$ for all $t\geq0$, and $m_i(t)$ is
monotone non-decreasing over time. Thus, $m_i(t)$ can be interpreted as
the probability that node $i$ has left state $S$ by time $t$,
i.e., the probability that node $i$ is in one of the states $D$, $I$, or $R$. Moreover, since $s_i(t)=1-m_i(t)$ $\forall t$, $\mathbf{S}(t)$ is entrywise monotone non-increasing over time, and hence $\mathbf{S}(t)\leq \mathbf{S}(0)$. Let $\mathbf{m}^*\in\mathbb{R}^N$ denote the vector with entries $m_i^*=\sup_{t\geq0}m_i(t)$ $\forall i\in[N]$.

\begin{definition}
For a deleted edge set $P$, define
\begin{align}
\Phi(P)
\overset{\mathrm{def}}{=}
\|\mathbf{m}^*-\mathbf{m}(0)\|_1
\label{cumulative}
\end{align}
as the cumulative number of new infections after the propagation process terminates.
\end{definition}

\begin{problem}\label{problem}
Given a digraph $G=(V,E)$ and initial states
$\mathbf{x}(0),\mathbf{y}(0),\mathbf{r}(0)$ such that
$\mathbf{x}(0)+\mathbf{y}(0)+\mathbf{r}(0)\in[0,1]^N$, let
$\mathcal{Q}\subseteq E$ be a candidate set of edges and let
$k\leq|\mathcal{Q}|$ be a positive integer. Find a set $P^*\subseteq\mathcal{Q}$ with
$|P^*|\leq k$ such that
\[
P^*
\in
\operatorname*{argmin}_{P\subseteq\mathcal{Q},\,|P|\leq k}
\Phi(P).
\]
\end{problem}
\begin{remark} Finding an optimal solution $P^*$ to Problem \ref{problem} is NP-hard. The proof is analogous to \cite[Theorem 4.4]{Yi2022}.
\end{remark}

Since Problem~\ref{problem} is NP-hard, directly optimizing $\Phi(P)$ is impractical for large networks. Therefore, when the model convergence condition holds, i.e., the probability of each node being infected converges to zero, we construct $\widehat{\Phi}(P)$ as an upper bound on $\Phi(P)$ that is monotone and supermodular with respect to the deleted edge set $P$. This upper bound serves as a surrogate objective, enabling an efficient greedy edge-selection algorithm.

\subsection{Supermodular Upper Bound}

For a deleted edge set $P\subseteq \mathcal{Q}$, let
$\mathbf{B}_{-P}$ denote the matrix obtained from $\mathbf{B}$ by
setting $B_{ij}=0$ for every edge $(j,i)\in P$, and define $
\mathbf{F}_{-P}=\mathbf{S}(0)\mathbf{B}_{-P}$. Correspondingly, define the comparison matrix $
\mathbf{M}_{-P}
=
\begin{bmatrix}
\mathbf{I}-\mathbf{D}+\mathbf{A}\mathbf{F}_{-P}
&
\mathbf{W}\\
(\mathbf{I}-\mathbf{A})\mathbf{F}_{-P}
&
\mathbf{I}-\mathbf{W}-\mathbf{D}'
\end{bmatrix}$.

Under Assumption~\ref{assp3}, the matrix $\mathbf{W}+\mathbf{D}'$ is invertible. Define $\mathbf{T}=\mathbf{W}(\mathbf{W}+\mathbf{D}')^{-1}$ and $\mathbf{\Theta}=\mathbf{A}+\mathbf{T}(\mathbf{I}-\mathbf{A})$.

\begin{theorem}\label{upper-bound}
If $\rho(\mathbf{M}_{-P})<1$, then the cumulative number of new infections
after deleting the edge set $P$ satisfies
\begin{align}\label{eq:upper_bound}
\Phi(P) \leq \widehat{\Phi}(P)=1^\top
\mathbf{F}_{-P}
\left(
\mathbf{D}
-
\mathbf{\Theta}\mathbf{F}_{-P}
\right)^{-1}
\left(
\mathbf{x}(0)
+
\mathbf{T}\mathbf{y}(0)
\right).
\end{align}
\end{theorem}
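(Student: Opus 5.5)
We bound the cumulative new infections through the linearized dynamics with the comparison matrix $\mathbf{M}_{-P}$, and then collapse the $2N$-dimensional Neumann series into an $N$-dimensional expression by eliminating the delay block via a Schur complement.

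\textbf{Step 1: reduce $\Phi(P)$ to a sum of new infections.} From the identity for $m_i(t+1)-m_i(t)$ in Section~\ref{section-4} with $\mathbf{B}$ replaced by $\mathbf{B}_{-P}$, and since $m_i(t)$ is non-decreasing with $s_i(t)\le s_i(0)$, we have
\[
\mathbf{m}^*-\mathbf{m}(0)=\sum_{t\ge0}\mathbf{S}(t)\mathbf{B}_{-P}\mathbf{x}(t)\le \mathbf{F}_{-P}\sum_{t\ge0}\mathbf{x}(t).
\]
All quantities are nonnegative. Hence $\Phi(P)\le \mathbf{1}^\top\mathbf{F}_{-P}\sum_t\mathbf{x}(t)$.

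\textbf{Step 2: entrywise domination by the linear system.} Let $\bar{\mathbf{v}}(t)=\mathbf{M}_{-P}^t\mathbf{v}(0)$. Every block of $\mathbf{M}_{-P}$ is entrywise nonnegative: $\mathbf{I}-\mathbf{D}\ge0$, and $\mathbf{I}-\mathbf{W}-\mathbf{D}'\ge0$ by Assumption~\ref{assp3}. The true transition matrix at time $t$ uses $\mathbf{S}(t)\le\mathbf{S}(0)$, so it is entrywise dominated by $\mathbf{M}_{-P}$. Induction then gives $\mathbf{v}(t)\le\bar{\mathbf{v}}(t)$. Since $\rho(\mathbf{M}_{-P})<1$, the Neumann series converges, so
\[
\sum_t \mathbf{v}(t)\le(\mathbf{I}-\mathbf{M}_{-P})^{-1}\mathbf{v}(0),
\]
and this inverse is nonnegative.

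\textbf{Step 3: compute the $\mathbf{x}$-block of $(\mathbf{I}-\mathbf{M}_{-P})^{-1}\mathbf{v}(0)$.} Write $(\mathbf{I}-\mathbf{M}_{-P})\begin{bmatrix}\mathbf{u}\\ \mathbf{w}\end{bmatrix}=\begin{bmatrix}\mathbf{x}(0)\\ \mathbf{y}(0)\end{bmatrix}$, i.e.
\[
(\mathbf{D}-\mathbf{A}\mathbf{F}_{-P})\mathbf{u}-\mathbf{W}\mathbf{w}=\mathbf{x}(0),\qquad -(\mathbf{I}-\mathbf{A})\mathbf{F}_{-P}\mathbf{u}+(\mathbf{W}+\mathbf{D}')\mathbf{w}=\mathbf{y}(0).
\]
The matrix $\mathbf{W}+\mathbf{D}'$ is invertible under Assumption~\ref{assp3}, so solve the second equation for $\mathbf{w}$ and substitute into the first. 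Using the fact that diagonal matrices commute, this gives
\[
(\mathbf{D}-\mathbf{A}\mathbf{F}_{-P}-\mathbf{T}(\mathbf{I}-\mathbf{A})\mathbf{F}_{-P})\mathbf{u}=\mathbf{x}(0)+\mathbf{T}\mathbf{y}(0),
\]
that is, $(\mathbf{D}-\mathbf{\Theta}\mathbf{F}_{-P})\mathbf{u}=\mathbf{x}(0)+\mathbf{T}\mathbf{y}(0)$. Combining with Steps 1 and 2 yields \eqref{eq:upper_bound}.

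\textbf{Main obstacle: invertibility of $\mathbf{D}-\mathbf{\Theta}\mathbf{F}_{-P}$.} Step 3 needs this matrix to be invertible, and indeed to be a nonsingular M-matrix, so that the formula is meaningful and nonnegative. It is the Schur complement of the block $\mathbf{W}+\mathbf{D}'$ in $\mathbf{I}-\mathbf{M}_{-P}$.

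The approach is as follows.
\begin{itemize}[leftmargin=*]
\item $\mathbf{I}-\mathbf{M}_{-P}$ is a Z-matrix with $\rho(\mathbf{M}_{-P})<1$, hence a nonsingular M-matrix.
\item Schur complements of nonsingular M-matrices with respect to a principal block are again nonsingular M-matrices.
\item Therefore $\mathbf{D}-\mathbf{\Theta}\mathbf{F}_{-P}$ is invertible with nonnegative inverse.
\item This also confirms that $\mathbf{u}$ equals the nonnegative $\mathbf{x}$-block of $(\mathbf{I}-\mathbf{M}_{-P})^{-1}\mathbf{v}(0)$.
\end{itemize}

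If citing the Schur-complement property is undesirable, it can be proved directly. Take a positive vector $\mathbf{z}$ with $(\mathbf{I}-\mathbf{M}_{-P})\mathbf{z}>0$, which exists by the M-matrix characterization. Its $\mathbf{x}$-part then satisfies $(\mathbf{D}-\mathbf{\Theta}\mathbf{F}_{-P})\mathbf{z}_x>0$, which certifies the M-matrix property.
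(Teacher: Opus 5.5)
Your proof is correct and follows the paper's argument essentially step for step: the domination $\mathbf{v}(t)\le\mathbf{M}_{-P}^t\mathbf{v}(0)$, the Neumann series, elimination of the $\mathbf{y}$-block through $(\mathbf{W}+\mathbf{D}')^{-1}$, and the bound $\mathbf{m}^*-\mathbf{m}(0)\le\mathbf{F}_{-P}\sum_t\mathbf{x}(t)$. The only difference is in proving that $\mathbf{D}-\mathbf{\Theta}\mathbf{F}_{-P}$ is invertible: the paper uses the factorization $\det(\mathbf{I}_{2N}-\mathbf{M}_{-P})=\det(\mathbf{W}+\mathbf{D}')\det(\mathbf{D}-\mathbf{\Theta}\mathbf{F}_{-P})$, while your M-matrix Schur-complement argument additionally gives a nonnegative inverse, which the paper needs later in Theorem~\ref{tighter-bound} and Proposition~\ref{supermodular}.
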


\begin{remark}
If the original model satisfies $\rho(\mathbf{M})<1$, then for every
$P\subseteq \mathcal{Q}$, we have
$\mathbf{B}_{-P}\leq\mathbf{B}$, and therefore
$\mathbf{M}_{-P}\leq\mathbf{M}$.
Since $\mathbf{M}_{-P}$ and $\mathbf{M}$ are nonnegative matrices, the monotonicity of the spectral radius~\cite{Horn2012} gives
$\rho(\mathbf{M}_{-P})\leq\rho(\mathbf{M})<1$. Hence, the condition in Theorem~\ref{upper-bound} holds for every deleted edge set $P\subseteq \mathcal{Q}$.
\end{remark}

We next compare the proposed upper bound with that in~\cite{Khanh2026}.
For $P\subseteq\mathcal{Q}$, let
$\mathbf{M}_{Q,-P}
=\mathbf{I}-\mathbf{D}+\mathbf{G}_{Q}\mathbf{F}_{-P}$.  Under the condition $\rho(\mathbf{M}_{Q,-P})<1$, the upper bound in~\cite{Khanh2026} is given by
\begin{align}
\Phi_Q(P)
=
1^\top
\mathbf{F}_{-P}
\left(
\mathbf{D}
-
\mathbf{G}_{Q}\mathbf{F}_{-P}
\right)^{-1}
\left(
\mathbf{x}(0)
+
\mathbf{Q}\mathbf{y}(0)
\right).
\label{eq:old_upper_bound}
\end{align}

\begin{remark}
$\Phi_Q$ is monotone non-increasing and supermodular with respect to
the deleted edge set $P\subseteq \mathcal{Q}$, as given in~\cite{Khanh2026}.
\label{remarkofsuper}
\end{remark}

\begin{theorem}\label{tighter-bound}
Under Assumption~\ref{assp1}, if $\rho(\mathbf{M}_{Q,-P}) < 1$ then
\begin{align*}
\Phi(P)\leq\widehat{\Phi}(P)\leq\Phi_Q(P),
\end{align*}
where $\widehat{\Phi}(P)$ and $\Phi_Q(P)$ are defined in \eqref{eq:upper_bound} and \eqref{eq:old_upper_bound}, respectively.
\end{theorem}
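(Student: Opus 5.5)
The plan is to derive the first inequality from Theorem~\ref{upper-bound}, after transferring the spectral hypothesis via Theorem~\ref{sdir_previous_relation}. The second inequality then follows from an entrywise comparison of the two closed-form bounds, using Neumann-series monotonicity of nonnegative inverses.

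\emph{Step 1 (first inequality).} Theorem~\ref{sdir_previous_relation} concerns a generic infection matrix $\mathbf{B}$ and only uses Assumption~\ref{assp1} and nonnegativity. I would therefore apply it verbatim with $\mathbf{B}$ replaced by $\mathbf{B}_{-P}$, i.e.\ with $\mathbf{F}$ replaced by $\mathbf{F}_{-P}$. This gives $\rho(\mathbf{M}_{-P})\leq\rho(\mathbf{M}_{Q,-P})<1$. Theorem~\ref{upper-bound} then applies and yields $\Phi(P)\leq\widehat{\Phi}(P)$.

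\emph{Step 2 (entrywise comparison of the weights).} I would show $\mathbf{T}\leq\mathbf{Q}$ and $\mathbf{\Theta}\leq\mathbf{G}_Q$.
\begin{itemize}
\item Case $W_i+D'_i-D_i>0$: since $D_i\geq0$,
\[
T_i=\frac{W_i}{W_i+D'_i}\leq\frac{W_i}{W_i+D'_i-D_i}\leq q_i .
\]
\item Case $W_i+D'_i-D_i=0$: then $W_i=0$, so $T_i=0<q_i$.
\end{itemize}
Since $\mathbf{I}-\mathbf{A}\geq0$ is diagonal, it follows that $\mathbf{\Theta}=\mathbf{A}+\mathbf{T}(\mathbf{I}-\mathbf{A})\leq\mathbf{A}+\mathbf{Q}(\mathbf{I}-\mathbf{A})=\mathbf{G}_Q$. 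We also get $\mathbf{x}(0)+\mathbf{T}\mathbf{y}(0)\leq\mathbf{x}(0)+\mathbf{Q}\mathbf{y}(0)$, because $\mathbf{y}(0)\geq0$.

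\emph{Step 3 (comparison of the inverses).} This is the step I expect to need the most care: the inverse in $\widehat{\Phi}$ must exist, be nonnegative, and be dominated by the one in $\Phi_Q$.
\begin{itemize}
\item Set $\widetilde{\mathbf{M}}=\mathbf{I}-\mathbf{D}+\mathbf{\Theta}\mathbf{F}_{-P}$. Then $0\leq\widetilde{\mathbf{M}}\leq\mathbf{M}_{Q,-P}$ entrywise, using $\mathbf{I}-\mathbf{D}\geq0$ and $\mathbf{F}_{-P}\geq0$.
\item Monotonicity of the spectral radius gives $\rho(\widetilde{\mathbf{M}})\leq\rho(\mathbf{M}_{Q,-P})<1$.
\item Hence both $\mathbf{D}-\mathbf{\Theta}\mathbf{F}_{-P}=\mathbf{I}-\widetilde{\mathbf{M}}$ and $\mathbf{D}-\mathbf{G}_Q\mathbf{F}_{-P}=\mathbf{I}-\mathbf{M}_{Q,-P}$ are invertible, with Neumann series
\[
(\mathbf{I}-\widetilde{\mathbf{M}})^{-1}=\sum_{k\geq0}\widetilde{\mathbf{M}}^k,\qquad (\mathbf{I}-\mathbf{M}_{Q,-P})^{-1}=\sum_{k\geq0}\mathbf{M}_{Q,-P}^k .
\]
\item Since $0\leq\widetilde{\mathbf{M}}^k\leq\mathbf{M}_{Q,-P}^k$ termwise for all $k$, we get $0\leq(\mathbf{D}-\mathbf{\Theta}\mathbf{F}_{-P})^{-1}\leq(\mathbf{D}-\mathbf{G}_Q\mathbf{F}_{-P})^{-1}$.
\end{itemize}

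\emph{Step 4 (combining).} Every factor in $\widehat{\Phi}(P)$ is nonnegative: $\mathbf{1}^\top$, $\mathbf{F}_{-P}$, the inverse, and the vector $\mathbf{x}(0)+\mathbf{T}\mathbf{y}(0)$. Each factor is entrywise dominated by the corresponding factor of $\Phi_Q(P)$ in \eqref{eq:old_upper_bound}. Products of nonnegative matrices and vectors are monotone in each factor, so replacing the factors one at a time gives $\widehat{\Phi}(P)\leq\Phi_Q(P)$. The only subtle point is Step~3, where nonnegativity of $\mathbf{I}-\mathbf{D}$ (valid since $D_i\in[0,1]$) is what makes the Neumann-series comparison legitimate.
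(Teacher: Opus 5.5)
Your proposal is correct. Steps 1, 2 and 4 match the paper's proof: the paper also transfers Theorem~\ref{sdir_previous_relation} to $\mathbf{F}_{-P}$, shows $\mathbf{T}\leq\mathbf{Q}$ and $\mathbf{\Theta}\leq\mathbf{G}_Q$, and then multiplies out nonnegative factors. Your case split for $W_i+D'_i-D_i=0$ is more careful than the paper's one-line claim $T_i\leq q_i$.

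The genuine difference is Step 3. The paper sets $\mathbf{Z}_T=\mathbf{D}-\mathbf{\Theta}\mathbf{F}_{-P}$ and $\mathbf{Z}_Q=\mathbf{D}-\mathbf{G}_Q\mathbf{F}_{-P}$. It asserts $\mathbf{Z}_Q^{-1}\geq0$ from $\rho(\mathbf{M}_{Q,-P})<1$ and $\mathbf{Z}_T^{-1}\geq0$ from $\rho(\mathbf{M}_{-P})<1$. It then concludes with the resolvent identity $\mathbf{Z}_Q^{-1}-\mathbf{Z}_T^{-1}=\mathbf{Z}_Q^{-1}(\mathbf{Z}_T-\mathbf{Z}_Q)\mathbf{Z}_T^{-1}\geq0$. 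The claim $\mathbf{Z}_T^{-1}\geq0$ there silently uses the fact that $\mathbf{Z}_T$ is the Schur complement of $\mathbf{W}+\mathbf{D}'$ in $\mathbf{I}_{2N}-\mathbf{M}_{-P}$. Hence $\mathbf{Z}_T^{-1}$ is the upper-left block of $(\mathbf{I}_{2N}-\mathbf{M}_{-P})^{-1}\geq0$, a fact the paper only spells out later, in the proof of Proposition~\ref{supermodular}.

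You instead observe $0\leq\mathbf{I}-\mathbf{D}+\mathbf{\Theta}\mathbf{F}_{-P}\leq\mathbf{M}_{Q,-P}$ and compare Neumann series termwise. This gives invertibility, nonnegativity and the domination $\mathbf{Z}_T^{-1}\leq\mathbf{Z}_Q^{-1}$ in one stroke. It stays at the $N\times N$ level and uses only the hypothesis $\rho(\mathbf{M}_{Q,-P})<1$, so it is the more self-contained argument. The paper's identity is a shorter algebraic step, but only once both inverses are known to be nonnegative. Either way, $\rho(\mathbf{M}_{-P})<1$ from Step 1 is still needed for the first inequality, so your variant loses nothing.
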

Although the deterministic SDIR model in this paper retains $\mathbf{S}(t)$ in the recurrence, the upper bound in~\cite{Khanh2026} remains a valid upper bound for the cumulative number of new infections considered here. Moreover, the theorem shows that jointly considering the two state vectors $\mathbf{x}(t)$ and $\mathbf{y}(t)$ yields an upper bound tighter than the upper bound proposed in~\cite{Khanh2026}.

\begin{proposition}\label{supermodular}
If $\rho(\mathbf{M})<1$, then
$\widehat{\Phi}(P)$ is monotone non-increasing and supermodular with respect to
the deleted edge set $P\subseteq \mathcal{Q}$.
\end{proposition}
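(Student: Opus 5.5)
The plan is to reduce $\widehat{\Phi}$ to the same structural form as the D-SIR bound of Yi \emph{et al.}~\cite{Yi2022} and $\Phi_Q$ of~\cite{Khanh2026}. Then the known monotonicity/supermodularity argument applies with $\mathbf{G}_Q$ replaced by $\mathbf{\Theta}$ and $\mathbf{Q}$ replaced by $\mathbf{T}$. Note that $\mathbf{\Theta}=\mathbf{A}+\mathbf{T}(\mathbf{I}-\mathbf{A})$ is diagonal with entries $\theta_i=\alpha_i+t_i(1-\alpha_i)\in[0,1]$, since $t_i=W_i/(W_i+D'_i)\in[0,1]$ by Assumption~\ref{assp3}. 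Also $\mathbf{u}_0:=\mathbf{x}(0)+\mathbf{T}\mathbf{y}(0)\geq\mathbf{0}$ is independent of $P$.

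\textbf{Step 1: expand as a Neumann series.} First I would show that $\rho(\mathbf{M})<1$ implies $\rho(\mathbf{D}^{-1}\mathbf{\Theta}\mathbf{F}_{-P})<1$ for every $P$, at least when $\mathbf{D}$ is invertible; the degenerate case would be handled by perturbation or restricted to the support. This is a Schur-complement argument. If $\rho(\mathbf{M}_{-P})<1$ and $\mathbf{M}_{-P}\geq 0$, then $\mathbf{I}-\mathbf{M}_{-P}$ is a nonsingular M-matrix. Its Schur complement with respect to the block $\mathbf{W}+\mathbf{D}'$ is
\[
\mathbf{D}-\mathbf{A}\mathbf{F}_{-P}-\mathbf{W}(\mathbf{W}+\mathbf{D}')^{-1}(\mathbf{I}-\mathbf{A})\mathbf{F}_{-P}=\mathbf{D}-\mathbf{\Theta}\mathbf{F}_{-P},
\]
using that diagonal matrices commute. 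Schur complements of nonsingular M-matrices are nonsingular M-matrices, so $(\mathbf{D}-\mathbf{\Theta}\mathbf{F}_{-P})^{-1}\geq 0$. This gives
\[
\widehat{\Phi}(P)=\mathbf{1}^\top\mathbf{F}_{-P}\sum_{k\geq0}(\mathbf{D}^{-1}\mathbf{\Theta}\mathbf{F}_{-P})^k\mathbf{D}^{-1}\mathbf{u}_0
=\sum_{k\geq 0}\mathbf{1}^\top \mathbf{F}_{-P}(\mathbf{D}^{-1}\mathbf{\Theta}\mathbf{F}_{-P})^k\mathbf{D}^{-1}\mathbf{u}_0 .
\]
The remark after Theorem~\ref{upper-bound} ensures $\rho(\mathbf{M}_{-P})\le\rho(\mathbf{M})<1$, so this expansion is valid uniformly for all $P\subseteq\mathcal{Q}$.

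\textbf{Step 2: monotonicity.} Every term in the series is a sum of products of nonnegative entries of $\mathbf{F}_{-P}=\mathbf{S}(0)\mathbf{B}_{-P}$, multiplied by fixed nonnegative diagonal factors and by $\mathbf{u}_0$. Deleting more edges only zeroes entries of $\mathbf{B}$. Hence each term, and so $\widehat{\Phi}$, is non-increasing in $P$.

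\textbf{Step 3: supermodularity (the main obstacle).} I would follow the path-counting argument of~\cite{Yi2022,Khanh2026}. Expanding the $k$-th term entrywise writes $\widehat{\Phi}(P)$ as a sum over walks $w$ in $G$ of nonnegative weights $c_w$, where $c_w$ is independent of $P$. Here $\widehat{\Phi}(P)=\sum_{w:\,E(w)\cap P=\emptyset}c_w$.

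Define $f(P)=\widehat{\Phi}(\emptyset)-\widehat{\Phi}(P)=\sum_{w:\,E(w)\cap P\neq\emptyset}c_w$, the total weight of walks hit by $P$. This is a weighted coverage function, hence submodular. Therefore $\widehat{\Phi}=\widehat{\Phi}(\emptyset)-f$ is supermodular.

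Concretely, for $P_1\subseteq P_2$ and $e\notin P_2$, I would verify
\[
\widehat{\Phi}(P_1)-\widehat{\Phi}(P_1\cup\{e\})=\sum_{w\ni e,\,E(w)\cap P_1=\emptyset}c_w\;\geq\;\sum_{w\ni e,\,E(w)\cap P_2=\emptyset}c_w .
\]
The delicate points are the following.

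\begin{itemize}
\item Walks may repeat edges. This is harmless, since only the edge set $E(w)$ matters.
\item Absolute convergence justifies rearranging the infinite sums.
\item Establishing Step 1's invertibility/nonnegativity carefully is the key technical part. If the M-matrix route fails, I would instead use the similarity between $\mathbf{M}$ and the block elimination, combined with Perron–Frobenius, to compare $\rho(\mathbf{D}^{-1}\mathbf{\Theta}\mathbf{F}_{-P})$ with $1$.
\end{itemize}
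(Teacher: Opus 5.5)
Your proof is correct, but it takes a different route from the paper.

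\textbf{The paper's route.} The paper never expands $(\mathbf{D}-\mathbf{\Theta}\mathbf{F}_{-P})^{-1}$ as a series. It treats deleting one edge $a=(j,i)$ as the rank-one perturbation $\mathbf{\Xi}_{-P}+d\mathbf{u}_i\mathbf{u}_j^\top$. Sherman--Morrison, together with $H(P)\geq\mathbf{0}$, gives monotonicity of $H$. The nonnegativity of $H(P)$ comes from the same block-inverse/Schur-complement fact you use in Step~1. The paper then writes $H(P)-H(P\cup\{a\})$ as an integral over $\lambda\in[0,1]$ of products of nonnegative inverses, which gives entrywise supermodularity of $H$. Finally it handles the prefactor $\mathbf{F}_{-P}$ separately, by splitting the marginal gain of $\widehat{\Phi}$ into two nonnegative terms.

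\textbf{What your route buys.} Your walk expansion exhibits $\widehat{\Phi}$ directly as a weighted coverage function. Monotonicity and supermodularity then reduce to inclusions of index sets. The prefactor $\mathbf{F}_{-P}$ is absorbed as the first edge of each walk. You need no rank-one formula, no interpolation in $\lambda$, and no comparison of inverses along that path.

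\textbf{What it costs, and two tightenings.} The price is the Neumann series in $\mathbf{D}^{-1}\mathbf{\Theta}\mathbf{F}_{-P}$, and here your hedges are unnecessary. First, since $\mathbf{M}\geq\mathbf{0}$, we have $1>\rho(\mathbf{M})\geq M_{ii}\geq 1-D_i$. So $\mathbf{D}$ is automatically invertible and the degenerate case never occurs; no perturbation or fallback argument is needed. Second, you should state explicitly how $(\mathbf{D}-\mathbf{\Theta}\mathbf{F}_{-P})^{-1}\geq\mathbf{0}$ gives $\rho(\mathbf{D}^{-1}\mathbf{\Theta}\mathbf{F}_{-P})<1$. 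This is the regular-splitting theorem, applied with $\mathbf{D}^{-1}\geq\mathbf{0}$ and $\mathbf{\Theta}\mathbf{F}_{-P}\geq\mathbf{0}$.

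\textbf{Optional simplification.} You could skip Step~1 altogether by expanding in the $2N$-dimensional space:
$\widehat{\Phi}(P)=\mathbf{1}^\top\mathbf{F}_{-P}[\mathbf{I}\ \mathbf{0}]\sum_{t\geq0}\mathbf{M}_{-P}^t\mathbf{v}(0)$.
Every entry of $\mathbf{M}_{-P}$ is a $P$-independent nonnegative constant plus a nonnegative multiple of a single entry of $\mathbf{B}_{-P}$. Your coverage argument therefore applies verbatim to the resulting nonnegative monomials, with convergence coming directly from $\rho(\mathbf{M}_{-P})\leq\rho(\mathbf{M})<1$.
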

All proofs are provided in the Appendix.
\section{Edge Deletion Algorithm}\label{section-5}
Consider the objective function
$f\in\{\widehat{\Phi},\Phi_Q\}$, where
$\widehat{\Phi}$ is the upper bound proposed in this paper and
$\Phi_Q$ is the upper bound in~\cite{Khanh2026}.
Since both functions are monotone non-increasing and supermodular with respect to the
deleted edge set, we apply the following Greedy algorithm to select $k$ edges from
the candidate edge set $\mathcal{Q}$.

\begin{algorithm}[H]
	\caption{Greedy Algorithm (GA)}
	\label{greedy}
	\begin{flushleft}
	\textbf{Input:} A function
	$f\in\{\widehat{\Phi},\Phi_Q\}$, a graph $G$, initial states,
	a candidate edge set $\mathcal{Q}$, and an integer $k$.
	
	\textbf{Output:} An edge set $P\subseteq\mathcal{Q}$ of size $k$.
	
	Initialize $P\leftarrow \emptyset$
	
	\For{$i = 1$ \KwTo $k$}{
		Compute $f(P\cup \{e\})$ for each
		$e\in\mathcal{Q}\backslash P$\\
		$e^\star \leftarrow
		\operatorname*{argmax}_{e\in\mathcal{Q}\backslash P}
		\big(f(P)-f(P\cup \{e\})\big)$\\
		$P\leftarrow P\cup \{e^\star\}$
	}
	
	\Return $P$
	\end{flushleft}
\end{algorithm}
The Greedy Algorithm (GA) that uses $f=\widehat{\Phi}$ is called the
Improved Greedy Algorithm (\textbf{IGA}), while the version using $f=\Phi_Q$ is called the Baseline Greedy Algorithm (\textbf{BGA}). Based on Proposition \ref{supermodular} and Remark~\ref{remarkofsuper}, the solution returned by the greedy algorithm guarantees an approximation ratio of $(1-\frac{1}{e})$ for the function $
f(\emptyset)-f(\cdot)$, with respect to the optimal solution.
\section{Numerical Experiments}\label{section-6}
In this section, we perform numerical simulations of \eqref{eq:SDIR} to validate the theoretical results. We compare the proposed Improved Greedy Algorithm (\textbf{IGA}) with the following three algorithms
\begin{itemize}
	\item \textbf{Random} \cite{Callaway2000}: Randomly delete one edge at each iteration.
	\item \textbf{Max-Degree} \cite{Albert2000}: Delete an edge incident to the highest-degree node at each iteration.
	\item \textbf{BGA}: Algorithm~\ref{greedy} applied to $f=\Phi_Q$ defined in \eqref{eq:old_upper_bound}.
\end{itemize}

We first evaluate the proposed method on an Erd\H{o}s-R\'enyi (ER) network with $N=700$ and $p=0.023$. The initial states are nonzero at $|S|=7$ seed nodes, with $x_i(0)\in[0.80,0.85]$ and $y_i(0)\in[0,0.05]$. We sample $B_{ij}\in[0.024,0.034]$, $W_i\in[0.15,0.35]$, and $D_i\in[0.30,0.48]$, with $W_i+D'_i\leq0.95$. For $A_i$, the seed nodes, $45$ randomly selected non-seed nodes, and the remaining nodes use $[0.50,0.78]$, $[0.65,0.87]$, and $[0.15,0.35]$, respectively. We set $|\mathcal{Q}|=3500$ and $k=750$; results are shown in Fig.~\ref{fig:er_network}.

\begin{figure}[H]
\centering
\begin{tikzpicture}
\begin{axis}[
    width=\linewidth,
    height=0.73\linewidth,
    xlabel={$k$},
    ylabel={Number of new infections},
    ylabel near ticks,
    xmin=-20, xmax=770,
    ymin=3.5, ymax=21.5,
    xtick={0, 100, 200, 300, 400, 500, 600, 700},
    ytick={5.0, 7.5, 10.0, 12.5, 15.0, 17.5, 20.0},
    yticklabel style={
        /pgf/number format/fixed,
        /pgf/number format/precision=1,
        /pgf/number format/zerofill
    },
    grid=both,
    grid style={dashed, gray!25, line width=0.1pt},
    axis line style={line width=0.1pt},
    tick style={line width=0.1pt},
    label style={font=\tiny},
    tick label style={font=\tiny},
    title style={font=\tiny\bfseries, yshift=-0.5ex},
    line width=0.8pt,
    legend style={
        at={(0.5,-0.17)},     
        anchor=north,
        legend columns=-1,    
        font=\fontsize{6.5}{7.5}\selectfont,
        draw=none,             
        fill=none,             
        inner sep=2pt,
        /tikz/every even column/.append style={column sep=8pt} 
    }
]

\addplot[color=mplorange, smooth] coordinates {
    (0, 20.65) (20, 16.0) (40, 12.5) (60, 9.8) (80, 7.6) 
    (110, 7.1) (150, 6.7) (200, 6.2) (250, 5.7) (300, 5.2) 
    (350, 4.8) (400, 4.5) (450, 4.3) (500, 4.1) (550, 4.05) 
    (600, 4.0) (650, 3.98) (700, 3.95) (750, 3.92)
};
\addlegendentry{IGA}  

\addplot[color=mplblue, smooth] coordinates {
    (0, 20.65) (20, 19.5) (40, 17.0) (60, 13.5) (80, 11.0) 
    (110, 7.8) (140, 7.2) (180, 6.9) (220, 6.5) (270, 6.1) 
    (330, 5.5) (380, 5.1) (430, 4.8) (480, 4.6) (530, 4.4) 
    (580, 4.2) (630, 4.1) (680, 4.05) (750, 4.0)
};
\addlegendentry{BGA}  

\addplot[color=mplgreen] coordinates {
    (0, 20.65) (20, 19.9) (40, 19.3) (60, 18.8) (80, 18.4) (100, 18.2) 
    (120, 18.0) (140, 17.8) (160, 17.6) (180, 17.5) (200, 17.2) (220, 16.9) 
    (240, 16.5) (260, 16.4) (280, 16.2) (300, 15.8) (330, 15.7) (340, 15.4) 
    (370, 15.1) (390, 14.8) (410, 14.6) (430, 14.4) (460, 14.1) (500, 13.9) 
    (520, 13.7) (550, 13.5) (570, 13.3) (600, 13.1) (640, 13.0) (680, 12.8) 
    (700, 12.6) (730, 12.5) (750, 12.4)
};
\addlegendentry{Max-Degree}  

\addplot[color=mplred] coordinates {
    (0, 20.65) (30, 20.3) (50, 19.95) (70, 19.8) (100, 19.5) (140, 19.2) 
    (180, 19.0) (210, 18.7) (240, 18.3) (270, 18.0) (300, 17.8) (340, 17.5) 
    (350, 17.2) (390, 17.0) (410, 16.9) (440, 16.6) (480, 16.4) (510, 16.3) 
    (540, 16.05) (580, 15.9) (600, 15.45) (640, 15.3) (670, 15.0) (710, 14.8) 
    (730, 14.6) (750, 14.5)
};
\addlegendentry{Random}  

\end{axis}
\end{tikzpicture}
\caption{Performance evaluation of edge deletion on Erd\H{o}s--R\'enyi Network.}
\label{fig:er_network}
\end{figure}

To further validate the effectiveness of our method on real-world data, we next consider the Haslemere dataset collected in the UK through the BBC Pandemic project~\cite{Firth2020,Klepac2018}.  Following the parameters in \cite{Khanh2026}, the infection parameter is adjusted such that $B_{ij}\in[0.056, 0.063]$. We compare the algorithms in terms of performance using the same deletion budget $k=210$ from a candidate set $\mathcal{Q}$ of $520$ edges. The corresponding results are shown in Fig.~\ref{fig:haslemere_network}.

\begin{figure}[H]
\centering
\begin{tikzpicture}
\begin{axis}[
    width=\linewidth,
    height=0.75\linewidth,
    xlabel={$k$},
    ylabel={Number of new infections},
    ylabel near ticks,
    xmin=-8, xmax=220,
    ymin=1.7, ymax=9.1,
    xtick={0, 50, 100, 150, 200},
    ytick={2, 3, 4, 5, 6, 7, 8, 9},
    grid=both,
    grid style={dashed, gray!25, line width=0.1pt},
    axis line style={line width=0.1pt},             
    tick style={line width=0.1pt},                  
    legend style={
        at={(0.5,-0.18)},     
        anchor=north,
        legend columns=-1,      
        font=\fontsize{6.5}{7.5}\selectfont,
        draw=none,             
        fill=none,             
        inner sep=2pt,
        /tikz/every even column/.append style={column sep=12pt} 
    },
    label style={font=\tiny},
    tick label style={font=\tiny},
    title style={font=\tiny\bfseries, yshift=-0.5ex},
    line width=0.8pt                             
]
\addplot[color=mplorange] coordinates {
    (0, 8.8) (10, 5.1) (20, 3.45) (30, 2.88) (40, 2.61) 
    (50, 2.44) (60, 2.35) (70, 2.29) (80, 2.25) (90, 2.21) 
    (100, 2.19) (120, 2.14) (150, 2.09) (180, 2.05) (210, 2.04)
};
\addlegendentry{IGA}
\addplot[color=mplblue] coordinates {
    (0, 8.8) (10, 6.2) (20, 4.35) (30, 3.3) (40, 2.9) 
    (50, 2.67) (60, 2.52) (70, 2.39) (80, 2.31) (90, 2.25) 
    (100, 2.22) (120, 2.18) (150, 2.11) (180, 2.08) (210, 2.05)
};
\addlegendentry{BGA}
\addplot[color=mplgreen] coordinates {
    (0, 8.75) (10, 7.65) (30, 6.57) (50, 6.25) (70, 6.10) 
    (80, 6.00) (90, 5.58) (110, 5.45) (120, 5.32) (150, 5.14) 
    (160, 4.80) (180, 4.72) (190, 4.70) (205, 4.60) (210, 4.43)
};
\addlegendentry{Max-Degree}
\addplot[color=mplred] coordinates {
    (0, 8.8) (10, 8.7) (20, 8.56) (70, 8.22) (80, 8.03) 
    (90, 7.98) (110, 7.74) (120, 7.15) (130, 6.54) (140, 6.28) 
    (150, 5.86) (160, 5.78) (170, 5.57) (180, 5.50) (190, 4.76) (210, 4.66)
};
\addlegendentry{Random}

\end{axis}
\end{tikzpicture}
\caption{Performance evaluation of edge deletion on Haslemere Network.}
\label{fig:haslemere_network}
\end{figure}

The edge-deletion process of Algorithm~\ref{greedy} in Fig.~\ref{fig:er_network} and Fig.~\ref{fig:haslemere_network}, using the two upper bounds $\widehat{\Phi}(\cdot)$ and $\Phi_Q(\cdot)$, guides edge selection substantially more effectively than \textbf{Random} and \textbf{Max-Degree}. In particular, the improved algorithm, \textbf{IGA}, using $\widehat{\Phi}(\cdot)$ reduces infections faster than \textbf{BGA}, which uses $\Phi_Q(\cdot)$ from~\cite{Khanh2026}. This observation is consistent with Theorem~\ref{tighter-bound}, which shows that $\widehat{\Phi}(\cdot)$ provides a tighter upper bound than $\Phi_Q(\cdot)$ and hence a more informative surrogate for ranking candidate edges by marginal reduction.
Although the proposed analysis is performed in the $2N$-dimensional space, the resulting upper bound in Theorem~\ref{upper-bound} only involves the inverse of an $N\times N$ matrix, as does $\Phi_Q$ in~\cite{Khanh2026}. Moreover, using the rank-one update strategy in~\cite{Yi2022}, both IGA and BGA can be implemented with computational complexity
$O\!\left(N^3+k(N^2+|\mathcal{Q}|N)\right)$. 

Finally, we compare two quantities that upper-bound the infection level of the two comparison systems on the same Haslemere dataset and with the same infection parameters as in Fig. \ref{fig:haslemere_network}. Specifically, we consider $\|\mathbf{L}_Q\mathbf{M}^t\mathbf{v}(0)\|_1$
for the proposed method and
$\|\mathbf{M}_Q^t(\mathbf{x}(0)+\mathbf{Q}\mathbf{y}(0))\|_1$
for the method in~\cite{Khanh2026}, where $\mathbf{L}_Q=[\mathbf{I}\ \mathbf{Q}]$ is used to ensure that the two quantities have the same initial value
$\|\mathbf{x}(0)+\mathbf{Q}\mathbf{y}(0)\|_1$. Their decay rates are then governed by the spectral radii of the corresponding matrices, namely $\rho(\mathbf{M})$ and
$\rho(\mathbf{M}_Q)$. 

In Fig.~\ref{fig:convergence}, the quantity corresponding to the proposed method decreases and converges to zero faster than that of the method in~\cite{Khanh2026}. This result is consistent with Theorem~\ref{sdir_previous_relation}, which gives
$\rho(\mathbf{M})\leq\rho(\mathbf{M}_Q)$.

\begin{figure}[H]
\centering
\begin{tikzpicture}
\begin{axis}[
    width=0.95\linewidth,
    height=0.66\linewidth,
    xlabel={Time step},
    xmin=-8, xmax=520,
    ymin=-0.2, ymax=4.4,
    xtick={0, 100, 200, 300, 400, 500},
    ytick={0, 1, 2, 3, 4},
    grid=both,
    grid style={dashed, gray!25, line width=0.1pt},
    axis line style={line width=0.1pt},
    tick style={line width=0.1pt},
    legend style={
        at={(0.5,-0.22)},
        anchor=north,
        legend columns=2,
        font=\fontsize{6.5}{7.5}\selectfont,
        draw=none,
        fill=none,
        inner sep=2pt,
        /tikz/every even column/.append style={column sep=12pt}
    },
    label style={font=\small},
    tick label style={font=\tiny},
    title style={font=\small\bfseries, yshift=-0.5ex},
    line width=0.8pt
]

\addplot[color=mplblue] coordinates {
    (0,4.1849579285)
    (1,3.4522364305)
    (2,2.7466367554)
    (3,2.2467998660)
    (4,1.8842842067)
    (5,1.6099433038)
    (6,1.3935014403)
    (7,1.2167233943)
    (8,1.0684821867)
    (9,0.9417895868)
    (10,0.8320838349)
    (11,0.7362484077)
    (12,0.6520465126)
    (13,0.5777923684)
    (14,0.5121582921)
    (15,0.4540600879)
    (16,0.4025877096)
    (17,0.3569621571)
    (18,0.3165076504)
    (19,0.2806327931)
    (20,0.2488171246)
    (25,0.1362579924)
    (30,0.0745589287)
    (35,0.0407739649)
    (40,0.0222886476)
    (45,0.0121801485)
    (50,0.0066546510)
    (55,0.0036351916)
    (60,0.0019855322)
    (65,0.0010843947)
    (70,0.0005921998)
    (75,0.0003233901)
    (80,0.0001765908)
    (85,0.0000964265)
    (90,0.0000526519)
    (95,0.0000287491)
    (100,0.0000156974)
    (103,0.0000109181)
    (500,0.0000000000)
};
\addlegendentry{$\mathbf{M}$}

\addplot[color=mplorange] coordinates {
    (0,4.1849579285)
    (1,3.4616381388)
    (2,3.0294703030)
    (3,2.7732260111)
    (4,2.6197822713)
    (5,2.5249832640)
    (6,2.4630027162)
    (7,2.4190710841)
    (8,2.3848880095)
    (9,2.3558595310)
    (10,2.3294824146)
    (20,2.0880583965)
    (30,1.8583810473)
    (40,1.6481285612)
    (50,1.4589714855)
    (60,1.2901944382)
    (70,1.1402839609)
    (80,1.0074708072)
    (90,0.8899725259)
    (100,0.7861044168)
    (110,0.6943242102)
    (120,0.6132435294)
    (130,0.5416236686)
    (140,0.4783647203)
    (150,0.4224924966)
    (160,0.3731453162)
    (170,0.3295615509)
    (180,0.2910682574)
    (190,0.2570709666)
    (200,0.2270445863)
    (210,0.2005253292)
    (220,0.1771035667)
    (230,0.1564175099)
    (240,0.1381476251)
    (250,0.1220116995)
    (260,0.1077604827)
    (270,0.0951738372)
    (280,0.0840573376)
    (290,0.0742392679)
    (300,0.0655679690)
    (310,0.0579094955)
    (320,0.0511455475)
    (330,0.0451716425)
    (340,0.0398955019)
    (350,0.0352356253)
    (360,0.0311200319)
    (370,0.0274851482)
    (380,0.0242748265)
    (390,0.0214394769)
    (400,0.0189353019)
    (410,0.0167236198)
    (420,0.0147702666)
    (430,0.0130450692)
    (440,0.0115213783)
    (450,0.0101756576)
    (460,0.0089871192)
    (470,0.0079374059)
    (480,0.0070103005)
    (490,0.0061914830)
    (500,0.0054683051)
};
\addlegendentry{$\mathbf{M}_Q$}

\end{axis}
\end{tikzpicture}
\caption{Convergence comparison between $\mathbf{M}$ and $\mathbf{M}_Q$ formulations.}
\label{fig:convergence}
\end{figure}

\section{Conclusion}\label{section-7}
Rather than relying on a weighted combination of the two states as in previous studies, this paper directly analyzes the SDIR model in the $2N$-dimensional space. This formulation yields a tighter spectral convergence condition and a tighter upper bound for influence minimization, which is incorporated into a greedy edge-deletion algorithm. Experiments on the synthetic Erd\H{o}s-R\'enyi network and the real-world Haslemere dataset demonstrate the effectiveness of the resulting Improved Greedy Algorithm (\textbf{IGA}) in reducing information spread. Future work will investigate broader network topologies, parameter sensitivity, and scalable implementations for larger networks.

\section*{Appendix}\label{Appendix}

\subsection{Proof of Theorem~\ref{sdir_convergence}}

\begin{proof}
Since $\mathbf{S}(t)\leq\mathbf{S}(0)$ for all $t\geq 0$, from \eqref{eq:SDIR} we have $
\mathbf{v}(t+1)
=
\begin{bmatrix}
\mathbf{I}-\mathbf{D}+\mathbf{A}\mathbf{S}(t)\mathbf{B}
&
\mathbf{W}\\
(\mathbf{I}-\mathbf{A})\mathbf{S}(t)\mathbf{B}
&
\mathbf{I}-\mathbf{W}-\mathbf{D}'
\end{bmatrix}
\mathbf{v}(t)
\leq
\mathbf{M}\mathbf{v}(t)$. Since $\mathbf{M}\geq0$, induction yields
$\mathbf{v}(t)\leq\mathbf{M}^t\mathbf{v}(0)$ for all $t\geq0$.

Since $\rho(\mathbf{M})<1$, there exist $\gamma$ and $C>0$ such that
$\rho(\mathbf{M})<\gamma<1$ and
$\|\mathbf{M}^t\|\leq C\gamma^t$.
Thus, $
\|\mathbf{v}(t)\|
\leq
C\gamma^t\|\mathbf{v}(0)\|$. Since $\lim_{t\rightarrow\infty}\gamma^t=0$,
$\mathbf{v}(t)$ converges exponentially to $\mathbf{0}$. Specifically,
$\lim_{t\to\infty}\mathbf{x}(t)=\mathbf{0}$ and
$\lim_{t\to\infty}\mathbf{y}(t)=\mathbf{0}$.

Hence, $\widehat{\mathbf{v}}=\mathbf{0}$ is the unique equilibrium and the system $(\mathbf{x}(t),\mathbf{y}(t))$ is exponentially stable at the origin for every initial state satisfying the model conditions.
\end{proof}

\subsection{Proof of Theorem~\ref{sdir_previous_relation}}
\begin{proof}
From the choice of $q_i$ $\forall i \in [N]$ that we have fixed, $D_i\leq D'_i+(1-q_i^{-1})W_i$ for all $i$, which yields $
\mathbf{W}+\mathbf{Q}(\mathbf{I}-\mathbf{W}-\mathbf{D}')
\leq
(\mathbf{I}-\mathbf{D})\mathbf{Q}$. Moreover, since $\mathbf{G}_Q\mathbf{F}\geq\mathbf{0}$, we have
$(\mathbf{I}-\mathbf{D})\mathbf{Q}
\leq
\mathbf{M}_{Q}\mathbf{Q}$.
Hence,
$\mathbf{L}_Q\mathbf{M}
\leq
\mathbf{M}_{Q}\mathbf{L}_Q$,
where
$\mathbf{L}_Q=[\mathbf{I}\ \mathbf{Q}]$.

Since $\mathbf{M}\geq0$, the Perron--Frobenius theorem guarantees the existence of
$\mathbf{z}\geq0$, $\mathbf{z}\neq\mathbf{0}$ such that
$\mathbf{M}\mathbf{z}=\rho(\mathbf{M})\mathbf{z}$.
Let $\mathbf{u}=\mathbf{L}_Q\mathbf{z}$.
Since $q_i>0$, we have $\mathbf{u}\geq0$ and
$\mathbf{u}\neq\mathbf{0}$. From the above inequality, we obtain $\mathbf{M}_Q\mathbf{u}
\geq
\mathbf{L}_Q\mathbf{M}\mathbf{z}
=
\rho(\mathbf{M})\mathbf{u}$. By the subinvariance property of the Perron--Frobenius theorem
\cite[Chapter~8]{Horn2012}, it follows that
$
\rho(\mathbf{M})\leq\rho(\mathbf{M}_Q).
$
\end{proof}

\subsection{Proof of Theorem~\ref{upper-bound}}

\begin{proof}
As in the proof of Theorem~\ref{sdir_convergence}, since
$\mathbf{S}(t)\leq\mathbf{S}(0)$, we have
$
\mathbf{v}(t+1)
\leq
\mathbf{M}_{-P}\mathbf{v}(t).
$
Since $\mathbf{M}_{-P}\geq0$,
$\mathbf{v}(t)\leq\mathbf{M}_{-P}^t\mathbf{v}(0)$.
Since $\rho(\mathbf{M}_{-P})<1$,
$\sum_{t=0}^{\infty}\mathbf{M}_{-P}^t
=(\mathbf{I}_{2N}-\mathbf{M}_{-P})^{-1}$, and hence
$\sum_{t=0}^{\infty}\mathbf{v}(t)
\leq
(\mathbf{I}_{2N}-\mathbf{M}_{-P})^{-1}\mathbf{v}(0)$. Let
$
\begin{bmatrix}
\mathbf{u}_x\\
\mathbf{u}_y
\end{bmatrix}
=
(\mathbf{I}_{2N}-\mathbf{M}_{-P})^{-1}
\begin{bmatrix}
\mathbf{x}(0)\\
\mathbf{y}(0)
\end{bmatrix}.
$ Then,
$\sum_{t=0}^{\infty}\mathbf{x}(t)\leq\mathbf{u}_x$.

We have
$
\mathbf{I}_{2N}-\mathbf{M}_{-P}
=
\begin{bmatrix}
\mathbf{D}-\mathbf{A}\mathbf{F}_{-P}
&
-\mathbf{W}\\
-(\mathbf{I}-\mathbf{A})\mathbf{F}_{-P}
&
\mathbf{W}+\mathbf{D}'
\end{bmatrix}.
$ Therefore,
$(\mathbf{D}-\mathbf{A}\mathbf{F}_{-P})\mathbf{u}_x
-\mathbf{W}\mathbf{u}_y=\mathbf{x}(0)$, and
$-(\mathbf{I}-\mathbf{A})\mathbf{F}_{-P}\mathbf{u}_x
+(\mathbf{W}+\mathbf{D}')\mathbf{u}_y=\mathbf{y}(0)$. Under Assumption~\ref{assp3}, $\mathbf{W}+\mathbf{D}'$ is invertible.
From the second equation,
$
\mathbf{u}_y
=
(\mathbf{W}+\mathbf{D}')^{-1}
\left(
\mathbf{y}(0)
+
(\mathbf{I}-\mathbf{A})\mathbf{F}_{-P}\mathbf{u}_x
\right).
$
Substituting into the first equation and using
$\mathbf{T}=\mathbf{W}(\mathbf{W}+\mathbf{D}')^{-1}$ yields
$
(\mathbf{D}-\mathbf{\Theta}\mathbf{F}_{-P})\mathbf{u}_x
=
\mathbf{x}(0)+\mathbf{T}\mathbf{y}(0).
$ We have $
\det(\mathbf{I}_{2N}-\mathbf{M}_{-P})
=
\det(\mathbf{W}+\mathbf{D}')
\det(\mathbf{D}-\mathbf{\Theta}\mathbf{F}_{-P})$.
Since $\rho(\mathbf{M}_{-P})<1$, $\det(\mathbf{I}_{2N}-\mathbf{M}_{-P})\neq 0$. Together with the invertibility of $\mathbf{W}+\mathbf{D}'$, this implies that $\mathbf{D}-\mathbf{\Theta}\mathbf{F}_{-P}$ is invertible. Therefore,
$
\mathbf{u}_x
=
(\mathbf{D}-\mathbf{\Theta}\mathbf{F}_{-P})^{-1}
\left(
\mathbf{x}(0)+\mathbf{T}\mathbf{y}(0)
\right).
$
Thus,
\[
\sum_{t=0}^{\infty}\mathbf{x}(t)
\leq
(\mathbf{D}-\mathbf{\Theta}\mathbf{F}_{-P})^{-1}
\left(
\mathbf{x}(0)+\mathbf{T}\mathbf{y}(0)
\right).
\]

Moreover,
\begin{align*}
\mathbf{m}(t)-\mathbf{m}(0)
&=
\sum_{\tau=0}^{t-1}
\mathbf{S}(\tau)\mathbf{B}_{-P}\mathbf{x}(\tau)\leq
\mathbf{F}_{-P}
\sum_{\tau=0}^{t-1}\mathbf{x}(\tau).
\end{align*}
Letting $t\rightarrow\infty$ gives
$
\mathbf{m}^*-\mathbf{m}(0)
\leq
\mathbf{F}_{-P}\sum_{t=0}^{\infty}\mathbf{x}(t).
$
Hence, from \eqref{cumulative},
\begin{align*}
\Phi(P)
&=
\|\mathbf{m}^*-\mathbf{m}(0)\|_1\\
&\leq
1^\top\mathbf{F}_{-P}
(\mathbf{D}-\mathbf{\Theta}\mathbf{F}_{-P})^{-1}
\left(
\mathbf{x}(0)+\mathbf{T}\mathbf{y}(0)
\right)=\widehat{\Phi}(P).
\end{align*}
\end{proof}
\subsection{Proof of Theorem~\ref{tighter-bound}}

\begin{proof}
Similar to the proof of Theorem~\ref{sdir_previous_relation}, we can prove that $
\rho(\mathbf{M}_{Q,-P})
\geq
\rho(\mathbf{M}_{-P})
$.
Since $\rho(\mathbf{M}_{Q,-P})<1$, we have $\rho(\mathbf{M}_{-P})<1$, and Theorem~\ref{upper-bound} gives $
\Phi(P)\leq\widehat{\Phi}(P).
$

Moreover, $T_i=\frac{W_i}{W_i+D'_i}\leq q_i$ for all $i\in[N]$, so
$\mathbf{T}\leq\mathbf{Q}$ and
$\mathbf{\Theta}
=\mathbf{A}+\mathbf{T}(\mathbf{I}-\mathbf{A})
\leq
\mathbf{A}+\mathbf{Q}(\mathbf{I}-\mathbf{A})
=\mathbf{G}_Q$.
Let
$\mathbf{Z}_T=\mathbf{D}-\mathbf{\Theta}\mathbf{F}_{-P}$ and
$\mathbf{Z}_Q=\mathbf{D}-\mathbf{G}_Q\mathbf{F}_{-P}$.
Then $\mathbf{Z}_T\geq\mathbf{Z}_Q$.
Since $\rho(\mathbf{M}_{Q,-P})<1$ and
$\rho(\mathbf{M}_{-P})<1$, we have
$\mathbf{Z}_Q^{-1}\geq0$ and $\mathbf{Z}_T^{-1}\geq0$.
From
$\mathbf{Z}_Q^{-1}-\mathbf{Z}_T^{-1}
=
\mathbf{Z}_Q^{-1}
(\mathbf{Z}_T-\mathbf{Z}_Q)
\mathbf{Z}_T^{-1}\geq0$,
it follows that $\mathbf{Z}_T^{-1}\leq\mathbf{Z}_Q^{-1}$.

Moreover, since $\mathbf{T}\leq\mathbf{Q}$, $\mathbf{x}(0)+\mathbf{T}\mathbf{y}(0)
\leq
\mathbf{x}(0)+\mathbf{Q}\mathbf{y}(0)$.
Therefore,
\begin{align*}
\widehat{\Phi}(P)
&=
1^\top\mathbf{F}_{-P}\mathbf{Z}_T^{-1}
\left(\mathbf{x}(0)+\mathbf{T}\mathbf{y}(0)\right)\\
&\leq
1^\top\mathbf{F}_{-P}\mathbf{Z}_Q^{-1}
\left(\mathbf{x}(0)+\mathbf{Q}\mathbf{y}(0)\right)=\Phi_Q(P).
\end{align*}
Thus
$\Phi(P)\leq\widehat{\Phi}(P)\leq\Phi_Q(P)$.
\end{proof}

\subsection{Proof of Proposition~\ref{supermodular}}
\begin{proof}
Let
$\mathbf{\Xi}_{-P}
=\mathbf{D}-\mathbf{\Theta}\mathbf{F}_{-P}$,
$H(P)=\mathbf{\Xi}_{-P}^{-1}$, and
$\mathbf{z}=\mathbf{x}(0)+\mathbf{T}\mathbf{y}(0)$. Consider an edge $a=(j,i)\notin P$ and let
$c=S_{ii}(0)B_{ij}\geq0$ and $d=c\Theta_{ii}\geq0$.
After adding $a$ to the deleted edge set, we have
$\mathbf{F}_{-(P\cup\{a\})}
=\mathbf{F}_{-P}-c\mathbf{u}_i\mathbf{u}_j^\top$ and
$\mathbf{\Xi}_{-(P\cup\{a\})}
=\mathbf{\Xi}_{-P}+d\mathbf{u}_i\mathbf{u}_j^\top$.
By the Sherman--Morrison formula,
\[
H(P\cup\{a\})
=
H(P)
-
\frac{
dH(P)\mathbf{u}_i\mathbf{u}_j^\top H(P)
}{
1+d\mathbf{u}_j^\top H(P)\mathbf{u}_i
}.
\]

Since $\mathbf{M}_{-P}\leq\mathbf{M}$ and $\rho(\mathbf{M})<1$, we have
$\rho(\mathbf{M}_{-P})<1$. Hence,
$(\mathbf{I}_{2N}-\mathbf{M}_{-P})^{-1}
=\sum_{k=0}^{\infty}\mathbf{M}_{-P}^{k}\geq\mathbf{0}$.
By the block inverse formula,
$H(P)=\mathbf{\Xi}_{-P}^{-1}$ is the upper-left block of
$(\mathbf{I}_{2N}-\mathbf{M}_{-P})^{-1}$, and therefore
$H(P)\geq\mathbf{0}$.
Thus, $H(P\cup\{a\})\leq H(P)$, i.e., $H(P)$ is entrywise
monotone non-increasing. Next,
\begin{align*}
H(P)-H(P\cup\{a\})
&=
\int_0^1
d\,
(\mathbf{\Xi}_{-P}+\lambda d\mathbf{u}_i\mathbf{u}_j^\top)^{-1}
\mathbf{u}_i\mathbf{u}_j^\top\\
&\quad\cdot
(\mathbf{\Xi}_{-P}+\lambda d\mathbf{u}_i\mathbf{u}_j^\top)^{-1}
\,d\lambda.
\end{align*}
Consider $P_1\subseteq P_2$ and $a\notin P_2$.
For every $\lambda\in[0,1]$,
$\mathbf{F}_{-P_1}-\lambda c\mathbf{u}_i\mathbf{u}_j^\top
\geq
\mathbf{F}_{-P_2}-\lambda c\mathbf{u}_i\mathbf{u}_j^\top$.
The corresponding comparison matrices are nonnegative and bounded
entrywise by $\mathbf{M}$. Hence, their spectral radii are less than one, and by the Neumann series \cite[Chapter~8]{Horn2012}, $ (\mathbf{\Xi}_{-P_1} +\lambda d\mathbf{u}_i\mathbf{u}_j^\top)^{-1} \geq (\mathbf{\Xi}_{-P_2} +\lambda d\mathbf{u}_i\mathbf{u}_j^\top)^{-1}.$
Since these matrices are nonnegative, it follows that
$
H(P_1)-H(P_1\cup\{a\})
\geq
H(P_2)-H(P_2\cup\{a\}).
$
Therefore, $H(P)$ is entrywise supermodular.

Finally, using
$\mathbf{F}_{-(P\cup\{a\})}
=\mathbf{F}_{-P}-c\mathbf{u}_i\mathbf{u}_j^\top$, we obtain
$
\widehat{\Phi}(P)-\widehat{\Phi}(P\cup\{a\})
=
\mathbf{1}^\top\mathbf{F}_{-P}
\bigl(H(P)-H(P\cup\{a\})\bigr)\mathbf{z}
+
c\mathbf{1}^\top\mathbf{u}_i\mathbf{u}_j^\top
H(P\cup\{a\})\mathbf{z}.
$
Since all matrices and vectors above are nonnegative,
$\widehat{\Phi}(P)\geq\widehat{\Phi}(P\cup\{a\})$, so
$\widehat{\Phi}(P)$ is monotone non-increasing.

Moreover, for $P_1\subseteq P_2$ and $a\notin P_2$, we have
$\mathbf{F}_{-P_1}\geq\mathbf{F}_{-P_2}$,
$H(P_1)-H(P_1\cup\{a\})
\geq
H(P_2)-H(P_2\cup\{a\})$, and
$H(P_1\cup\{a\})\geq H(P_2\cup\{a\})$.
Therefore,
$
\widehat{\Phi}(P_1)-\widehat{\Phi}(P_1\cup\{a\})
\geq
\widehat{\Phi}(P_2)-\widehat{\Phi}(P_2\cup\{a\}).
$
Hence, $\widehat{\Phi}(P)$ is monotone non-increasing and
supermodular with respect to $P\subseteq\mathcal{Q}$.
\end{proof}

\begin{thebibliography}{99}%
	\bibitem{Albert2000}
R. Albert, H. Jeong, and A.-L. Barabási,
``Error and attack tolerance of complex networks,''
Nature, vol. 406, no. 6794, pp. 378--382, 2000.

\bibitem{Ahn2013}
H. J. Ahn and B. Hassibi,
``Global dynamics of epidemic spread over complex networks,''
in Proc. 52nd IEEE Conf. Decision and Control (CDC),
2013, pp. 4579--4585.

\bibitem{Callaway2000}
D. S. Callaway, M. E. J. Newman, S. H. Strogatz, and D. J. Watts,
``Network robustness and fragility: Percolation on random graphs,''
Phys. Rev. Lett., vol. 85, pp. 5468--5471, 2000.

\bibitem{Chen2021}
L. H. Chen, L. Hung, H. Lotze, and P. Rossmanith,
``Online node- and edge-deletion problems with advice,''
Algorithmica, vol. 83, pp. 2719--2753, 2021.

\bibitem{Chen2014}
W. Chen, C. Castillo, and L. V. Lakshmanan,
\textit{Information and Influence Propagation in Social Networks}.
Morgan \& Claypool Publishers, 2014.

\bibitem{Cho2024}
D. X. Cho, T. H. Anh, N. T. L. Phuong, and N. K. Khoa,
``Two-stage APT malware propagation model in computer networks,''
Neural Comput. Appl., vol. 37, pp. 21805--21832, 2025.

\bibitem{Dung2026}
H. P. Dung and D. K. Ly,
``Minimizing cumulative infections in SIS epidemic models over networks via an edge deletion algorithm,''
in Proc. 11th Int. Conf. Micro-Electronics, Electromagnetics and Telecommunications (ICMEET),
Lecture Notes in Electrical Engineering, Springer, to appear, 2026.

\bibitem{Dung2026-2}
H. P. Dung and N. H. Phuc,
``A novel approach for epidemic threshold of networks,''
2026, arXiv:2607.14048.

\bibitem{Firth2020}
J. A. Firth, J. Hellewell, P. Klepac, S. Kissler, A. J. Kucharski,
and L. G. Spurgin,
``Using a real-world network to model localized COVID-19 control strategies,''
Nat. Med., vol. 26, pp. 1616--1622, 2020.

\bibitem{Horn2012}
R. A. Horn and C. R. Johnson,
\textit{Matrix Analysis}.
Cambridge University Press, 2012.

\bibitem{Kempe2003}
D. Kempe, J. M. Kleinberg, and E. Tardos,
``Maximizing the spread of influence through a social network,''
in Proc. 9th ACM SIGKDD Int. Conf. Knowledge Discovery and Data Mining (KDD),
2003, pp. 137--146.

\bibitem{Kempe2015}
D. Kempe, J. Kleinberg, and E. Tardos,
``Maximizing the spread of influence through a social network,''
Theory Comput., vol. 11, no. 4, pp. 105--147, 2015.

\bibitem{Kermack1927}
W. O. Kermack and A. G. McKendrick,
``A contribution to the mathematical theory of epidemics,''
Proc. Roy. Soc. London A, vol. 115, no. 772, pp. 700--721, 1927.

\bibitem{Khanh2026}
T. V. Khanh, D. X. Cho, and H. P. Dung,
``A novel discrete-time model of information diffusion on social networks considering users behavior,''
in Proc. 40th Int. Conf. Infor. Networking (ICOIN),
IEEE, 2026, pp. 650--655.

\bibitem{Klepac2018}
P. Klepac, S. Kissler, and J. Gog,
``Contagion! The BBC Four Pandemic--the model behind the documentary,''
Epidemics, vol. 24, pp. 49--59, 2018.

\bibitem{Liu2022}
J. Liu, T. Saeed, and A. Zeb,
``Delay effect of an e-epidemic SEIRS malware propagation model with a generalized non-monotone incidence rate,''
Results Phys., vol. 39, Art. no. 105672, 2022.

\bibitem{Mieghem2009}
P. Van Mieghem,
``Virus spread in networks,''
IEEE/ACM Trans. Netw., vol. 17, no. 1, pp. 1--14, 2009.

\bibitem{Nowzari2016}
C. Nowzari, V. M. Preciado, and G. J. Pappas,
``Analysis and control of epidemics: A survey of spreading processes on complex networks,''
IEEE Control Syst., vol. 36, pp. 26--46, 2016.

\bibitem{Pare2020}
P. E. Pare, J. Liu, C. Beck, B. Kirwan, and T. Basar,
``Analysis, estimation, and validation of discrete-time epidemic processes,''
IEEE Trans. Control Syst. Technol., vol. 28, no. 1, pp. 79--93, 2020.

\bibitem{Pastor-Satorras2001}
R. Pastor-Satorras and A. Vespignani,
``Epidemic spreading in scale-free networks,''
Phys. Rev. Lett., vol. 86, pp. 3200--3203, 2001.

\bibitem{Pastor-Satorras2015}
R. Pastor-Satorras, C. Castellano, P. Van Mieghem, and A. Vespignani,
``Epidemic processes in complex networks,''
Rev. Mod. Phys., vol. 87, pp. 925--979, 2015.

\bibitem{Pham2019}
C. V. Pham, Q. V. Phu, H. X. Hoang, J. Pey, and M. T. Thai,
``Minimum budget for misinformation blocking in online social networks,''
J. Comb. Optim., vol. 38, pp. 1101--1127, 2019.

\bibitem{Ruhi2015}
A. Ruhi and B. Hassibi,
``SIRS epidemics on complex networks: Concurrence of exact Markov chain and approximated models,''
in Proc. 54th IEEE Conf. Decision and Control (CDC),
2015, pp. 2919--2926.

\bibitem{Shakarian2015}
P. Shakarian, A. Bhatnagar, A. Aleali, E. Shaabani, and R. Guo,
\textit{Diffusion in Social Networks}.
Springer, 2015.

\bibitem{Sharkey2008}
K. Sharkey,
``Deterministic epidemiological models at the individual level,''
J. Math. Biol., vol. 57, pp. 311--331, 2008.

\bibitem{Silva2016}
T. C. Silva and L. Zhao,
\textit{Machine Learning in Complex Networks}.
Cham, Switzerland: Springer, 2016.

\bibitem{Xie2023}
J. Xie, F. Zhang, K. Wang, X. Lin, and W. Zhang,
``Minimizing the influence of misinformation via vertex blocking,''
in Proc. IEEE Int. Conf. Data Engineering (ICDE),
2023, pp. 789--801.

\bibitem{Wang2003}
Y. Wang, D. Chakrabarti, C. Wang, and C. Faloutsos,
``Epidemic spreading in real networks: An eigenvalue viewpoint,''
in Proc. 22nd Int. Symp. Reliable Distributed Systems (SRDS),
2003, pp. 25--34.

\bibitem{Yi2022}
Y. Yi, L. Shan, P. Pare, and K. H. Johansson,
``Edge deletion algorithms for minimizing spread in SIR epidemic models,''
SIAM J. Control Optim., vol. 60, no. 2, pp. 246--273, 2022.

\bibitem{Youssef2011}
M. Youssef and C. Scoglio,
``An individual-based approach to SIR epidemics in contact networks,''
J. Theor. Biol., vol. 283, pp. 136--144, 2011.
\end{thebibliography}
\end{document}